\title{Recursive Backdoors for SAT} 
\titlerunning{Recursive Backdoors for SAT} 
\author{Nikolas Mählmann}{University of Bremen, Germany}{maehlmann@uni-bremen.de}{}{}
\author{Sebastian Siebertz}{University of Bremen, Germany}{siebertz@uni-bremen.de}{https://orcid.org/0000-0002-6347-1198}{}
\author{Alexandre Vigny}{University of Bremen, Germany}{vigny@uni-bremen.de}{https://orcid.org/0000-0002-4298-8876}{}
\authorrunning{N.\ Mählmann, S.\ Siebertz, A.\ Vigny} 
\keywords{Propositional satisfiability SAT, Backdoors, Parameterized Algorithms.}
\definecolor{cadmiumgreen}{rgb}{0.0, 0.42, 0.24}
\definecolor{dark-blue}{rgb}{0.05,0.25,1}
\tikzstyle{tz_vertex}=[fill=white, draw=black, shape=circle]
\tikzstyle{tz_blue_vertex}=[fill={rgb,255: red,137; green,207; blue,240}, draw=black, shape=circle]
\tikzstyle{tz_orange_vertex}=[fill={rgb,255: red,250; green,218; blue,94}, draw=black, shape=circle]
\tikzstyle{tz_vertex_green}=[fill={rgb,255: red,172; green,225; blue,175}, draw=black, shape=circle]
\tikzstyle{tz_red}=[fill={rgb,255: red,254; green,111; blue,94}, draw=black, shape=circle]
\tikzstyle{tz_big_text}=[fill=none, draw=none, shape=circle, font={\LARGE}]
\tikzstyle{tz_medium_text}=[fill=none, draw=none, shape=circle, font={\Large}]
\tikzstyle{tz_dashed}=[-, dashed]
\tikzstyle{tz_blue_bg}=[-, fill={rgb,255: red,137; green,207; blue,240}, dashed]
\tikzstyle{tz_green}=[-, dashed, draw=black, fill={rgb,255: red,172; green,225; blue,175}]
\tikzstyle{tz_orange}=[-, fill={rgb,255: red,250; green,218; blue,94}]
\tikzstyle{tz_dark_green_edge}=[-, fill=none, draw={rgb,255: red,172; green,225; blue,175}, ultra thick]
\tikzstyle{tz_dark_green_edge_dashed}=[-, draw={rgb,255: red,172; green,225; blue,175}, dashed, ultra thick]
\tikzstyle{tz_dark_blue_edge}=[-, draw={rgb,255: red,137; green,207; blue,240}, ultra thick]
\tikzstyle{tz_dark_blue_edge_dashed}=[-, draw={rgb,255: red,137; green,207; blue,240}, dashed, ultra thick]
\newcommand{\Cc}{\mathscr{C}}
\newcommand{\Bb}{\mathcal{B}}
\newcommand{\Tt}{\mathcal{T}}
\renewcommand{\phi}{\varphi}
\definecolor{unverifiedBlue}{rgb}{0.16, 0.32, 0.75}
\newenvironment{unverified}{\begingroup \color{unverifiedBlue}}{\endgroup}
\newcommand{\bd}{\mathrm{bd}}
\newcommand{\wbd}{\mathrm{wrbd}}
\newcommand{\sbd}{\mathrm{srbd}}
\newcommand{\sbdtwz}{\sbd_{\Cc_{0}}}
\newcommand{\lk}{\lambda_k}
\newcommand{\nd}[2]{N^\dagger_{#1}[#2]}
\newcommand{\inductionBaseCase}[0]{\medskip \noindent \emph{Base Case: }}
\newcommand{\inductionStep}[0]{\medskip \noindent \emph{Induction Step: }}
\newcommand{\algCase}[1]{\medskip \noindent \emph{Case #1: }}
\newcommand{\algHeading}[1]{\medskip \noindent \emph{#1:}}
\newcommand{\bigO}[1]{\ensuremath{\mathcal{O}(#1)}}
\newcommand{\recursiveBackdoor}{recursive backdoor\xspace}
\newcommand{\recursiveBackdoors}{recursive backdoors\xspace}
\newcommand{\srb}{\ensuremath{\mathrm{SRB}}}
\newcommand{\SRB}{\ensuremath{\mathrm{SRB}}\xspace}
\newcommand{\defSBDCase}{\parbox[c]{.18\textwidth}{\smallskip
    \baselineskip10pt
if $G\not\in \Cc$ and $G$ is connected\smallskip} }
\newcommand{\var}{\mathrm{var}}
\newcommand{\cla}{\mathrm{cla}}
\newcommand{\lit}{\mathrm{lit}}
\newcommand{\el}{V}
\begin{document}
\maketitle              

\begin{abstract}
  A strong backdoor in a formula $\phi$ of propositional logic to a
  tractable class $\Cc$ of formulas is a set~$B$ of variables of
  $\phi$ such that every assignment of the variables in $B$ results in
  a formula from $\Cc$.  Strong backdoors of small size or with a good
  structure, e.g.\ with small backdoor treewidth, lead to efficient
  solutions for the propositional satisfiability problem SAT.

  In this
  paper we propose the new notion of \emph{recursive backdoors}, which
  is inspired by the observation that in order to solve SAT we can
  independently recurse into the components that are created
  by partial assignments of variables. The quality of a recursive backdoor is measured
  by its \emph{recursive backdoor depth}.  Similar to the concept of
  backdoor treewidth, recursive backdoors of bounded depth include
  backdoors of unbounded size that have a certain treelike
  structure. However, the two concepts are incomparable and our
  results yield new tractability results for SAT.
\end{abstract}



\section{Introduction}

The problem of checking whether a formula of propositional logic in
conjunctive normal form~(CNF) is satisfiable (SAT) is one of the most
central problems in computer science. The problem is often seen as the
canonical NP-complete problem~\cite{cook1971complexity} and
conjectured to be not solvable in sub-exponential
time~\cite{impagliazzo2001problems}. Despite this theoretical hardness
result, state-of-the-art SAT solvers are able to efficiently solve
multi-million variable instances arising from real-world
applications. We refer to the recent survey of Vardi and
Ganesh~\cite{GaneshV20}, who try to explain this ``unreasonable
effectiveness of SAT solvers''. SAT is known to be solvable in
polynomial time on several restricted classes of formulas, e.g.\ on
Horn and 2CNF formulas.  However, this classification falls short of
explaining the practical efficiency of SAT solvers, as many
efficiently solvable instances do not belong to any of these classes.

Parameterized complexity theory offers a refined view on the
complexity of problems. Instead of measuring complexity only with
respect to the input size $n$, one or more parameters are taken into
account. Optimally, one can establish fixed-parameter tractability
with respect to a parameter $k$, that is, a running time of
$f(k)\cdot n^c$ for some computable function $f$ and a constant
$c$. In case the parameter $k$ is small on a given class of instances,
this may lead to efficient algorithms even if the inputs are
large. Even though SAT solvers may not be explicitly tailored to use
these parameters, it is conceivable that they implicitly exploit the
structure that is imposed by them.  This poses the question of
parametric characterizations of real-world application instances that
can be solved efficiently. One very important parameter to explain
tractability is \emph{treewidth}, which intuitively measures how
tree-like an instance is, and which can be used to obtain
fixed-parameter tractability for SAT~\cite{samer2010algorithms}.

A second very successful parametric approach was introduced by
Williams et al.~\cite{williams2003backdoors}. For a formula $\phi$, a
\emph{strong backdoor} to a given class $\Cc$ of formulas is a set of
variables of $\phi$ such that for every assignment of these variables
one obtains a formula in $\Cc$. Similarly, a \emph{weak backdoor} to
$\Cc$ for a satisfiable formula is a set of variables of $\phi$ such
that some assignment of these variables leads to a formula in
$\Cc$. These notions elegantly allow to lift tractability results from
classes $\Cc$ to classes that are \emph{close} to $\Cc$. Given a
formula and a strong backdoor of size $k$ to a tractable class, one
can decide satisfiability by checking $2^k$ tractable instances. For
small $k$ this yields efficient algorithms as noted by Nishimura et
al.~\cite{nishimura2004detection}, who first studied the parameterized
complexity of backdoor detection.

A lot of effort has been invested to develop fpt algorithms for
backdoor detection to various tractable base classes $\Cc$, for
example to classes of bounded treewidth
\cite{gaspers2013backdoorstotw} or heterogeneous classes
\cite{gaspers2017heterogenous}.  Treewidth is a width measure for
graphs that can however be applied to measure the complexity of
formulas by considering the incidence graphs of formulas. The
incidence graph of a formula has one vertex for each variable and one
vertex for each clause.  A variable vertex is connected with a clause
vertex when the variable is contained positively or negatively in the
clause. In the following, we will often use graph theoretic
terminology for formulas, and this always refers to the incidence
graph of the formula.

Apart from various base classes, alternative measures of quality of
backdoors have been proposed.  Backdoor trees generalize backdoor sets
into decision trees, whose quality is measured by their number of
leafs \cite{samer2008backdoortrees}.  Recently backdoor trees have
been further generalized to backdoor DNFs~\cite{ordyniak2021dfns}.
Ganian et al.~\cite{ganian2017backdoor} introduced the notion of
backdoor treewidth, which permits fpt backdoor detection for backdoors
of unbounded size. Even though backdoors of bounded treewidth can be
arbitrarily large, they showed that SAT is fixed-parameter tractable
when parameterized by the backdoor treewidth with respect to the
classes $\Cc$ of Horn, Anti-Horn and 2CNF formulas. They also consider
backdoors that split the input CNF formula into components that may
belong to different tractable classes.  For a an overview of
additional works we refer to the survey by Gaspers and Szeider
\cite{gaspers2012backdoors} as well as to the upcoming book chapter by
Samer and Szeider \cite{samer2021fpt}.

In this paper we introduce the new notions of \emph{strong} and
\emph{weak \recursiveBackdoors} as generalizations of backdoor sets
and backdoor trees.  Strong \recursiveBackdoors extend backdoor trees
by not only branching on truth values but also recursively branching
into the independent components of the formula that may arise after
the partial assignment of variables. We measure the quality of
\recursiveBackdoors by the depth of their branching trees.  The
splitting into components allows \recursiveBackdoors of bounded depth
to contain an unbounded number of variables.  Our definition, together
with the observation that after the assignment of a variable one can
independently solve the sub-instances in the arising components,
reveals a new potential of backdoors for SAT.

The main power of recursive backdoors, but also the difficulty in
their study, is that by assigning a variable we do not recurse into
the components that are created by deleting that variable, but into
the components that are created by deleting parts of the neighborhood
of the variable. We show that detecting weak recursive backdoors even
to the class $\Cc_0$ of edgeless incidence graphs is W[2]-hard.  Our
main technical contribution is an fpt algorithm that, given a formula
$\phi$ and a parameter $k$, either decides satisfiability of $\phi$ or
correctly concludes that $\phi$ has no strong \recursiveBackdoor to
$\Cc_0$ of depth at most $k$.  Even for the class $\Cc_0$ this yields
tractability results that cannot be achieved by backdoor treewidth.

 \medskip

 We provide background in \Cref{sec:prelims}. We define
 recursive backdoors in \Cref{sec:backdoors} and \Cref{sec:sketch}
 is devoted to a sketch of the fpt algorithm. The rest of the paper is
 devoted to the formal presentation and correctness proof of that
 algorithm.  Due to space constraints we present the hardness proof
 only in the appended full version of the paper.  Also some proofs of
 the main result are deferred to the appendix.


\section{Preliminaries}\label{sec:prelims}


\textbf{Propositional Logic.} We consider formulas of
propositional logic in conjunctive normal form (CNF), represented
by finite sets of clauses, and in the following when we speak of a
formula we will always mean a CNF formula.
We write $x,y,z\ldots$ for variables
and $\star,\diamond \in \{+,-\}$ for polarities.
A literal is a variable with an assigned polarity.
We write~$x_+$ for the positive literal $x$,
$x_-$ for the negative literal $\bar x$,
and $x_\star$ for a literal with arbitrary polarity.
Every clause is a finite set of literals.
We assume that no clause contains a complementary pair $x_+, x_-$.
For a formula $\phi$,
we write $\var(\phi)$ and $\cla(\phi)$ to refer to the sets
of variables and clauses of $\phi$, respectively. We say that a
variable~$x$ is positive (resp.\ negative) in a clause $c$ if $x_+\in c$
(resp.\ $x_- \in c$), and we write $\var(c)$
for the set of variables $x$ with $x_\star\in c$ and
$\lit(c)$ for the set of literals in $c$. For a formula $\phi$
we let $\var(\phi)=\bigcup_{c\in \phi}\var(c)$.

The {\em width} of $c$ is $|\var(c)|$ and the {\em length} of $\phi$ is
$\sum_{c\in \phi}|\var(c)|$, denoted~$|c|$ and $|\phi|$,
respectively.
We call a clause a {\em $d$-clause} if it has width exactly $d$.
We write $\Cc_d$ to refer to the class of CNF formulas whose clauses have at most width $d$.
Especially $\Cc_0$ denotes the class of empty formulas,
that either contain empty clauses or no clauses at all.

A {\em truth assignment} $\tau$ is a mapping from a set of variables,
denoted by $\var(\tau)$, to $\{+,-\}$.
A truth assignment $\tau$
satisfies a clause $c$ if $c$ contains at least one literal $x_\star$
with $\tau(x)=\star$.
A truth assignment $\tau$ of $\var(\phi)$ satisfies
the formula $\phi$ if it satisfies all clauses of $\phi$.

Given a formula $\phi$ and a truth assignment $\tau$,
$\phi[\tau]$ denotes the formula obtained from~$\phi$
by removing all clauses that are satisfied by $\tau$ and by
removing from the remaining clauses all literals $x_\star$ with
$\tau(x)\neq \star$. Note that for every formula $\phi$ and
assignment $\tau$ we have
$\var(\phi[\tau])\cap \var(\tau)=\emptyset$. If $\tau$ and
$\tau'$ are assignments with $\var(\tau)\cap \var(\tau')=\emptyset$,
then we write $\tau\cup\tau'$ for the unique assignment extending
both $\tau$ and $\tau'$.

\medskip\noindent
\textbf{Graphs.} We will consider only graphs that arise as
incidence graphs of formulas. The incidence graph
$G_\phi$ of a formula $\phi$ is a bipartite graph
with vertices $\var(\phi)\cup \cla(\phi)$. Slightly abusing
notation we usually do not distinguish between a formula and its
incidence graph. E.g.\ we speak of the variables and clauses
of $G_\phi$, which we denote by $\var(G_\phi)$ and $\cla(G_\phi)$
respectively. Vice versa, we speak e.g.\ of components of $\phi$
with implicit reference to the incidence graph $G_\phi$.
We drop the subscript $\phi$ if it is clear from
the context. The edges of $G$ are partitioned
into two parts $E_+$ (positive edges) and $E_-$ (negative edges),
where a variable~$x$ is connected to a clause $c$ by
an edge $E_\star$ if $x_\star\in \lit(c)$.
For an assignment $\tau$ we naturally define
$G[\tau]$ as the incidence graph of $\phi[\tau]$.
If $\tau$ assigns only a single variable
$x \mapsto \star$ we write $G[x_\star]$ for $G[\tau]$.
Note that for every assignment $\tau$, $G[\tau]$ is an induced subgraph of~$G$.
For a vertex $v$ the closed $\star$-neighborhood of $v$ is defined as
$N_\star[v]\coloneqq \{w~:~\{v,w\}\in E_\star\}$. For $W\subseteq V$
we write $G[W]$ for the subgraph induced by $W$ and
$G-W$ for $G[V\setminus W]$.

We refrain from formally defining treewidth and
refer to the literature for background.
A graph $H$ is a minor of a
graph $G$ if $H$ can be obtained from $G$ by deleting edges
and vertices and by contracting edges.
To compare our new definition of recursive backdoors with backdoor treewidth,
we mention that if a graph contains a $k\times k$ grid as a minor, then it
has treewidth at least $k$~\cite{robertson1986graph}.

\medskip\noindent
\textbf{Parameterized Complexity.}
A parameterized problem is called fixed-parameter tractable (fpt) if there
exists an algorithm deciding the problem in time $f(k)\cdot n^c$,
where $n$ is the input size, $k$ is the parameter, $f$ is a computable
function and $c$ is a constant. An algorithm witnessing fixed-parameter
tractability of a problem is called an fpt-algorithm for the problem.

To show that a problem is likely to be not fpt one can show that it is
W[$i$]-hard for some~$i\geq 1$. For this, it is sufficient to give a
parameterized reduction from a known W[$i$]-hard problem. We
refer to the book~\cite{cygan2015parameterized} for extensive background on parameterized
complexity theory.

\medskip\noindent
\textbf{Backdoors.} Let $\Cc$ be a class of formulas and 
let $\phi$ be a formula. 
A set $B\subseteq \var(\phi)$ is a  \emph{strong backdoor} of $\phi$ to $\Cc$ if 
for every assignment $\tau\colon B\rightarrow\{+,-\}$ the formula~$\phi[\tau]$ belongs to~$\Cc$. Note that for some assignments $\tau$ the formula $\phi[\tau]$
may not be satisfiable, even though~$\phi$ is satisfiable. Hence, in the 
following definition of a weak backdoor we require that~$\phi$ is satisfiable. 
If $\phi$ is satisfiable, then
a set $B\subseteq \var(\phi)$ is a \emph{weak backdoor} of~$\phi$ to the
class $\Cc$ if there exists an assignment $\tau\colon B\rightarrow\{+,-\}$
such that $\phi[\tau]$ is a satisfiable formula in $\Cc$. 
The classical measure for the complexity or quality of a backdoor is its size.

An important recent approach to measure the complexity of a backdoor
is to take its structure into account. The \emph{treewidth of a
backdoor} $B$ is defined as the treewidth of the graph with vertex set $B$
where two variables $x$ and $y$ are connected by an edge if there exists a
path from a neighbor of~$x$ to a neighbor of $y$ in $G-B$~\cite{ganian2017backdoor}.
Ganian et al.~\cite{ganian2017backdoor} also consider
backdoors that split the input CNF formula into
components that each may belong to a different tractable
class~$\Cc$.

\medskip\noindent
\textbf{Permissive Backdoor Detection.}    
In their survey Gaspers and Szeider~\cite{gaspers2012backdoors} differ
between a strict and a permissive version of the
backdoor detection problem.
Given a backdoor definition $\Bb$ and a corresponding quality measure $\mu$ 
(e.g. strong backdoors to 2CNF measured by their size) as
well as a formula $\phi$ and a parameter $k$.
The strict backdoor detection problem, denoted as $\Bb$-$\textsc{Detection}$, asks 
whether or not $\mu(\phi) \leq k$ holds.
The permissive backdoor detection problem, denoted as $\textsc{SAT}(\mu)$, asks
to either decide the satisfiability of $\phi$ or conclude that $\mu(\phi) > k$ holds.
The permissive version of the problem grants more freedom in algorithm design,
as trivial instances can be solved without calculating the backdoor measure.
However as Gaspers and Szeider point out, 
hardness proofs seem to be much harder for the permissive version.

\section{Recursive Backdoors}\label{sec:backdoors}
\textbf{Strong Recursive Backdoors.}
Our new concept of recursive backdoors is based on the observation
that we can handle the components of $G[x_\star]$
independently whenever
a variable~$x$ has been assigned.
A \emph{strong \recursiveBackdoor} (\srb) of an incidence graph $G$ to a
class $\Cc$  is a rooted
labeled tree, where every node is either labeled with a subgraph of
$G$ or with a variable in $\var(G)$. The root of the tree is
labeled with $G$. Whenever
an inner node is labeled with a connected graph $H$, then
it has one child labeled with a variable. Whenever it is labeled
with a disconnected graph, then it has one child for each of
its components, labeled with the graph induced by that
component. Whenever an inner node is labeled with a variable
$x$, then its parent is labeled with a graph $H$, and its
two children are labeled with $H[x_+]$ and $H[x_-]$,
respectively. Every leaf node is labeled with a graph from
$\Cc$. We call the nodes of the tree 
{\em variable nodes} or {\em component nodes},
according to to their labeling.

The {\em depth} of a strong recursive backdoor
is the maximal number of variable nodes
from its root to one of its leafs.
The \emph{strong \recursiveBackdoor depth} to a class $\Cc$ ($\sbd_{\Cc}$)
of an incidence graph~$G$
is the minimal depth of a strong recursive backdoor of $G$ to $\Cc$.
We give the following equivalent definition:

\begin{definition}[Strong Recursive Backdoor Depth]\label[definition]{def_sbd}
    \[
    \sbd_\Cc(G) =
    \begin{cases}
    0 & \text{if $G\in \Cc$} \\
    1 + \min_{x\in \var(G)}
        \max_{\star \in \{+,-\}}\,
            \sbd_\Cc(G[x_\star])
    & \defSBDCase \\
    \max\,
    \{\,\sbd_\Cc(H)
    ~:~H\text{ connected component of $G$\,}\}
    &\text{otherwise}
    \end{cases}
    \]
\end{definition}

\smallskip
To get a better understanding of strong recursive backdoor
depth we give an example of a family of incidence graphs
with unbounded backdoor treewidth to 2CNF
but constant strong recursive backdoor depth to
$\Cc_0$, the class of edgeless graphs.
For any $k\geq 0$, define the graph~$G_k$
as follows. We start with a $k\times k$ grid of clause
vertices $\{c_{1,1},...,c_{k,k}\}$,
depicted in yellow in \cref{fig:sbd_and_backdoor_treewidth}. We connect a private
variable vertex to each of the corners $c_{1,1},c_{1,k},c_{k,1},c_{k,k}$ 
of the grid. We now replace each
path of the grid by a path of length $6$ (containing $5$
vertices). Every second vertex on a new path is a
clause vertex, connected to the two adjacent variable
vertices. Furthermore, we add a special variable vertex $x$
that is connected alternatingly with positive and negative
polarity (depicted in green and blue in the figure)
to the clause vertices on the new paths. Variable vertices
are depicted as white vertices in the figure.

Since every clause
$c_{i,j}$ is connected to at least $3$ variables,
every backdoor set $B$ to 2CNF
will have to contain at least one variable of every $c_{i,j}$.
No matter which other variables are picked into $B$,
the backdoor $B$-torso graph of $G$ will always contain
a $k \times k$ grid as a minor, hence, will have
treewidth at least $k$.

A strong recursive backdoor to $\Cc_0$ with $x$ as its root
splits every grid clause into a separate component
of constant size and and therefore has constant depth.

\begin{figure}[H]
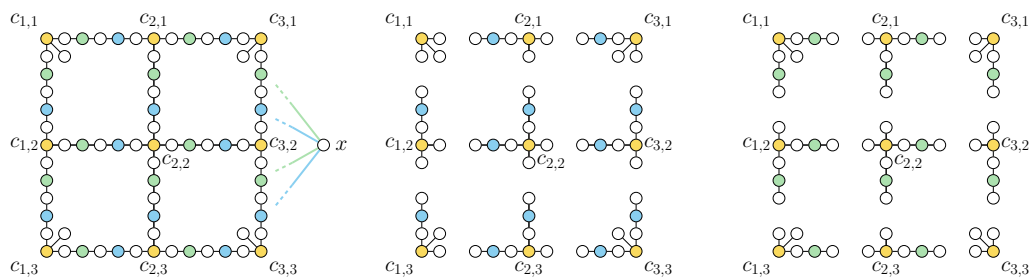

    \scalebox{0.47}{\tikzfig{sbd_and_backdoor_treewidth3}}
    \caption{From left to right: $G_3$, $G_3[x_+]$, and $G_3[x_-]$.}
    \label[figure]{fig:sbd_and_backdoor_treewidth}
\end{figure}

\medskip\noindent
\textbf{SAT Solving and SAT Counting using Strong Recursive Backdoors.}
Similar to regular strong backdoor sets,
strong recursive backdoors allow for polynomial time SAT Solving and SAT Counting
if the backdoor is given as part of the input.
First, we observe that even though it may have an unbounded branching degree,
the size of a recursive backdoor is still linear in the size of its formula.

\begin{lemma}\label[lemma]{srbs_have_few_leafs}
    Let $T$ be a strong recursive backdoor with depth $k$ of a formula $\phi$ to a
    class~$\Cc$.
    The number of leaf nodes in $T$,
    as well as the sum of number of vertices contained in leaf nodes
    is bound by $2^k \cdot |\phi|$.
\end{lemma}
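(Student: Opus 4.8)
The plan is to bound both quantities by induction on the depth $k$ of the strong recursive backdoor $T$. I would actually prove a slightly cleaner statement: for any strong recursive backdoor $T$ of depth $k$ of a graph $H$, the number of leaves of $T$ is at most $2^k\cdot(\text{number of non-isolated vertices of }H)$, or more crudely at most $2^k\cdot|H|$ where $|H|$ is the length of the corresponding formula (equivalently, a suitable measure of its size); and simultaneously the sum over all leaf nodes of the number of vertices in the leaf's label is also at most $2^k\cdot|H|$. Strengthening to also track the second quantity in the same induction is convenient because a leaf node contributes both "one leaf" and "its vertex count", and component nodes partition vertices while variable nodes strictly shrink the graph.

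First I would handle the base case $k=0$: then $H\in\Cc$ and the tree is a single leaf, so there is one leaf and its vertex count is $|V(H)|\le|\phi|\le 2^0\cdot|\phi|$. For the inductive step, consider the root of $T$. If the root is a component node, its children are the strong recursive backdoors of the components $H_1,\dots,H_m$ of $H$, each still of depth at most $k$ (splitting into components does not consume a variable node), and the vertex sets of the $H_i$ are disjoint subsets of $V(H)$. By the induction hypothesis each subtree has at most $2^k\cdot|H_i|$ leaves and leaf-vertex-sum at most $2^k\cdot|H_i|$, and summing over $i$ gives at most $2^k\cdot\sum_i|H_i|\le 2^k\cdot|\phi|$ for both quantities, since $\sum_i|H_i|\le|H|\le|\phi|$. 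If the root is a variable node labelled $x$, it has exactly two children, which are roots of strong recursive backdoors of $H[x_+]$ and $H[x_-]$, each of depth at most $k-1$. Note $H[x_\star]$ is an induced subgraph of $H$, so $|H[x_\star]|\le|H|\le|\phi|$. By the induction hypothesis each subtree has at most $2^{k-1}\cdot|\phi|$ leaves and the same bound on its leaf-vertex-sum; adding the two children gives at most $2\cdot 2^{k-1}\cdot|\phi|=2^k\cdot|\phi|$ for both quantities, as claimed.

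The one point that needs a little care — and what I would flag as the main obstacle — is the bookkeeping around which node types consume depth and how the size measure behaves. Component nodes do not increase depth but their children run over a partition of the vertex set, so the crude bound $|H|\le|\phi|$ alone is too lossy to close the induction along a long chain of component/variable alternations unless one tracks the measure of the current subgraph, not just the global $|\phi|$; this is exactly why I prove the sharper per-subgraph statement "$\le 2^k\cdot|H|$" rather than "$\le 2^k\cdot|\phi|$" directly, and only specialise to $H=G_\phi$ at the end. A secondary subtlety is that the definition allows a leaf to be labelled by a graph in $\Cc$ that could contain empty clauses or isolated vertices, so "number of vertices in leaf nodes" must be read literally as $|V(\cdot)|$ of the labelling subgraphs; since these subgraphs of $H[\tau]$ along a root-to-leaf branch have pairwise disjoint... — more precisely their vertex sets across sibling components are disjoint and each is an induced subgraph — the summation argument above still goes through verbatim, and the final bound $2^k\cdot|\phi|$ follows by setting $H=G$.
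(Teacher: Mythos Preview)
Your argument is correct and follows essentially the same approach as the paper: induct simultaneously on the depth $k$ and the size of the current subgraph, using that variable nodes consume one unit of depth and at most double both quantities while component nodes leave the depth unchanged and partition the vertex set. The only presentational wobble is that you announce ``induction on the depth $k$'' but then, in the component-node case, invoke the induction hypothesis at the \emph{same} depth $k$ on strictly smaller components; this is exactly what you later flag as the ``main obstacle,'' and the fix you describe (tracking the per-subgraph bound $2^k\cdot|H|$) amounts to the double induction on $k$ and $|\phi|$ that the paper states explicitly --- so just say up front that the induction is on the pair $(k,|H|)$ (or on the tree structure of $T$) and the argument is clean.
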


\begin{proof}
    Proof by induction on $k$ and $|\phi|$.
    The bound trivially holds when $k=0$ or $|\phi|=1$
    and the backdoor consists of a single leaf node.

    In the inductive step,
    a variable node increases the backdoor depth
    and at most doubles the number of leaves and their contained vertices,
    as it branches on both polarities.
    A component node does not increase the backdoor depth,
    but branches over disjoint components of strictly smaller size.
    As the sum of the vertices contained in the components is equal to $|\phi|$,
    the number of leaves and contained vertices is again bounded by~$2^k \cdot |\phi|$.
\end{proof}
%
%
%
We can use this observation to construct a straight-forward
bottom-up algorithm:

\begin{proposition}\label[proposition]{sat_using_srbds}
    Given a strong recursive backdoor with depth $k$ of a formula $\phi$ to a tractable class $\Cc$,
    we can check the satisfiability or count satisfying assignments for $\phi$
    in time~$2^k \cdot \mathrm{poly}(|\phi|)$.
\end{proposition}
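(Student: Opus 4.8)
The plan is to solve both tasks with a single bottom-up dynamic program over the given strong recursive backdoor tree $T$ (whose nodes, including all their graph labels, are part of the input). I will describe the counting version; the decision version is the obvious simplification, using $(\{0,1\},\vee,\wedge)$ in place of $(\N,+,\cdot)$ and omitting the correction factors introduced below. The first step is a size bound on $T$: by \Cref{srbs_have_few_leafs} it has at most $2^k\cdot|\phi|$ leaves and its leaf labels have total size at most $2^k\cdot|\phi|$; and since a connected non-leaf component node has a single child (a variable node, which has two children) while a variable node has two children and a disconnected component node at least two, a routine counting argument bounds the total number of nodes of $T$ by $\Oof(2^k\cdot|\phi|)$ as well. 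Also note that since $\Cc$ is tractable, satisfiability --- and, for the counting variant, the number of satisfying assignments --- can be computed in polynomial time on formulas from $\Cc$.

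For every component node $v$, with graph label $H_v$, I would let $n(v)$ denote the number of truth assignments of $\var(H_v)$ that satisfy the formula $H_v$, and compute these values from the leaves upward. If $v$ is a leaf, then $H_v\in\Cc$ and $n(v)$ is returned by the polynomial-time counting algorithm for $\Cc$. If $v$ is a disconnected component node with component-children $v_1,\dots,v_m$, set $n(v)=\prod_{i=1}^{m} n(v_i)$. If $v$ is a connected non-leaf component node, its unique child is a variable node carrying a variable $x$, whose two children $v_+$ and $v_-$ are the component nodes labeled $H_v[x_+]$ and $H_v[x_-]$; set
\[
n(v) \;=\; 2^{a_+}\cdot n(v_+)\;+\;2^{a_-}\cdot n(v_-),
\qquad\text{where } a_\star \;=\; |\var(H_v)| - 1 - |\var(H_v[x_\star])| .
\]
The algorithm outputs $n(r)$ for the root $r$, which is labeled $G_\phi$ and therefore yields the number of satisfying assignments of $\phi$.

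Correctness is proved by induction from the leaves to the root. The leaf case is the tractability of $\Cc$; the disconnected case is the multiplicativity of $\#\mathrm{SAT}$ over connected components, for which one checks that the components of an incidence graph partition its variable vertices and that each clause lies entirely in one component, so that no variable disappears and no correction is needed. The delicate case --- which I expect to be the main obstacle --- is the variable node: forming $H_v[x_\star]$ may remove not only $x$ but also every variable $y\neq x$ whose clauses all become satisfied by $x_\star$, so the satisfying assignments of $H_v$ with $x\mapsto\star$ are in bijection with pairs consisting of a satisfying assignment of $H_v[x_\star]$ and an arbitrary assignment of the $a_\star\ge 0$ variables in $\var(H_v)\setminus(\{x\}\cup\var(H_v[x_\star]))$, which contributes the factor $2^{a_\star}$; summing over $\star\in\{+,-\}$ gives the stated recurrence. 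For plain satisfiability this collapses to $n(v)=n(v_+)\vee n(v_-)$ and the factors disappear.

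Finally I would bound the running time. Each of the $\Oof(2^k\cdot|\phi|)$ nodes is processed once. All leaves together cost at most $2^k\cdot|\phi|\cdot\mathrm{poly}(|\phi|)$, since every leaf graph has size at most $|\phi|$ and the $\Cc$-algorithm is polynomial. Every other node requires only a constant number of additions and multiplications of non-negative integers bounded by $2^{|\var(\phi)|}$ --- hence of bit-length $\Oof(|\phi|)$ --- together with a comparison of two variable sets to read off the exponents $a_\star$, all in $\mathrm{poly}(|\phi|)$ time. Summing over all nodes gives a total running time of $2^k\cdot\mathrm{poly}(|\phi|)$, and the decision version is at least as fast.
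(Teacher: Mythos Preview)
Your proposal is correct and follows the same bottom-up dynamic program over the backdoor tree as the paper's proof, invoking \Cref{srbs_have_few_leafs} for the size bound and aggregating by sum at variable nodes and product at component nodes. One point worth noting: your treatment of the counting variant is in fact more careful than the paper's. The paper simply states that at a variable node ``the number of satisfiable assignments is the sum of the satisfiable assignments for its children'', which is not literally true when passing from $H_v$ to $H_v[x_\star]$ drops further variables whose clauses all became satisfied; your correction factors $2^{a_\star}$ with $a_\star=|\var(H_v)|-1-|\var(H_v[x_\star])|$ are exactly what is needed to make the recurrence correct, and your node-count argument ($C_1=V$ and $V+C_2\le L-1$, hence $O(L)$ nodes) fills in a detail the paper leaves implicit.
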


\begin{proof}
    By \cref{srbs_have_few_leafs}, we know that we have at most $2^k \cdot |\phi|$ instances labeling leaves,
    which can be solved in polynomial time.
    For variable nodes, the instance labeling the node is satisfiable if and only if
    at least one of its two children is labeled with a satisfiable instance.
    The number of satisfiable assignments is the sum of the satisfiable
    assignments for its children.
    For component nodes, the instance labeling the node is satisfiable if and only if
    all of its children are labeled with satisfiable instances.
    The number of satisfiable assignments is the product of the satisfiable assignments for its children.
\end{proof}


\medskip\noindent
\textbf{Weak Recursive Backdoors.}
Recall that in the definition of weak backdoors we consider
only satisfiable formulas and aim to find an assignment
$\tau$ that leads to a satisfiable formula $\phi[\tau]\in \Cc$.
This is also the case in the following definition of weak recursive
backdoor depth:

\begin{definition}[Weak Recursive Backdoor Depth]\label[definition]{def_wbd}
    \[
    \wbd_\Cc(G) =
    \begin{cases}
    0 & \parbox[c]{.18\textwidth}{\smallskip\baselineskip10pt
        if $G\in \Cc$ and $G$ is satisfiable\smallskip}\\

    \infty & \parbox[c]{.18\textwidth}{\smallskip\baselineskip10pt
        if $G\in \Cc$ and $G$ is unsatisfiable\smallskip}\\

    1 + \min_{x\in \var(G)}
        \min_{\star \in \{+,-\}}\,
            \wbd_\Cc(G[x_\star])
    & \defSBDCase \\
    \max\,
    \{\,\wbd_\Cc(H)
    ~:~H\text{ connected component of $G$\,}\}
    &\text{otherwise}
    \end{cases}
    \]
\end{definition}

\medskip
\noindent
\textbf{SAT Solving using Weak Recursive Backdoors.}
We can use the notion of weak recursive backdoors for SAT as follows.
This time we do not assume that the backdoor is given with the input. Let
$\Cc$ be a class such that we can test membership
and satisfiability in polynomial time. As above it is easy to see that the straight-forward algorithm that recursively assigns variables yields the
following result. Given a formula $\phi$ and an integer~$k$, we can
test in time $(2|\phi|)^k\cdot poly(|\phi|)$, whether $\phi$ is satisfiable,
unsatisfiable, or $\phi$ has weak recursive backdoor depth to $\Cc$
greater than $k$.
Note that when $k$ is small, even this running time is a major
improvement over the worst case running time of $2^{cn}$ implied by
the exponential time hypothesis (ETH).

%




\section{Proof Sketch}\label{sec:sketch}
Our goal is to show that the permissive backdoor detection problem
$\textsc{SAT}(\sbdtwz)$ is fixed-parameter
tractable. That is, we aim to decide for a given formula $\phi$ and
parameter $k$ whether $\phi$ is satisfiable or does not have a
strong recursive backdoor of depth $k$ to the class~$\Cc_0$ of empty formulas
i.e.~$\sbdtwz(\phi) > k$.
Our approach is based on two main
observations:
Formulas with strong recursive backdoor depth $k$ to
$\Cc_0$ have both a maximal clause degree $k$ (see \cref{limited_clause_degree})
and a diameter bounded by $\lk :=4\cdot 2^k$ in each connected component (see \cref{low_diameter}).

We are going to design a recursive algorithm that in every step
finds a bounded depth~\SRB
that reduces the maximal clause degree of $\phi$ by one,
or proves that the strong recursive
backdoor depth of $\phi$ is larger than $k$.
We extend the \SRB by recursively branching on its leaves until we reach $\Cc_0$.
Since $\phi$ has maximal clause degree $k$ this 
yields a bounded depth \SRB to $\Cc_0$ in fpt running time.


First, take a look at the special case where $G := G_\phi$ contains a clause  $c$ of width $k$, i.e.\ a $k$-clause.
By our first observation, the existence of $c$ implies that $\sbdtwz(G) \geq k$.
Note that the degree of $c$ can only be reduced by assigning a variable in its neighborhood.

Next, consider the case where $G$ contains two disjoint
$(k-1)$-clauses $c_1, c_2$ in the same connected component.
Since $G$ has limited diameter,
there exists a path $P$ of length $\leq \lk$ between them.
Since $c_1$ and $c_2$ are part of the same component we are only allowed 
to assign one variable to reduce the backdoor depth of that component to $k-1$.
No matter which variable we choose, since $c_1$ and $c_2$ are disjoint,
one of them will continue to exist in the reduced graph, which will then have 
backdoor depth at least $k-1$ and again, 
the existence of $c_1$, $c_2$, and $P$ implies that $\sbdtwz(G) \geq k$.

\begin{figure}[H]
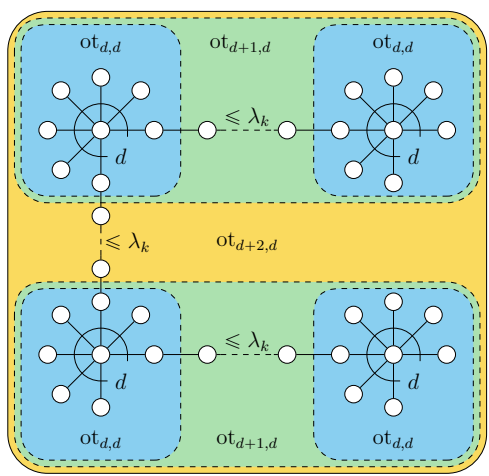

\centering
   \scalebox{0.7}{\tikzfig{proof_sketch}}
   \caption{Schematic depiction of an obstruction-tree for maximal clause degree $d$ 
   and strong recursive backdoor depth $d+2$.
   Here, the notation ``$\text{ot}_{i,d}$'' stands for the notion of $(i,d,k)$-obstruction-tree, as formally defined in \cref{def-obs-tree}.}
   \label{fig:proof_sketch}
\end{figure}

Given a maximal clause degree $d$, 
we generalize this strategy for arbitrary depths $k = d + j$ 
by searching for so called {\em obstruction-trees}.
An obstruction-tree for depth $k$ is a structured set of vertices, 
whose existence in $G$ will guarantee that $G$ has a strong recursive backdoor depth of at least $k$.
We start by searching for $d$-clauses as obstruction-trees for depth $d$. 
We search for obstruction-trees for depth $d + j + 1$ by searching for
two obstruction-trees of depth $d + j$
that have disjoint neighborhoods and
are connected by a path of bounded length.
A schematic depiction of obstruction-trees
for maximal clause degree $d$
and backdoor depths $d$, $d+1$, and $d+2$
is shown in \cref{fig:proof_sketch}.
Since the obstruction-trees are based on $d$-clauses we can
construct an fpt algorithm that either finds an obstruction-tree
or a small backdoor which reduces the maximal clause degree of $G$
to $d-1$.

The algorithm to find obstruction-trees, described in 
\cref{finding_ots_is_fpt}, is at the heart of our proof.
We use this algorithm to solve $G$
by recursively searching for obstruction-trees for depth $k + 1$.
In each round we either abort and conclude that $G$ has strong recursive backdoor depth
at least $k+1$ or reduce $d$ and recurse until we arrive at $\Cc_0$, 
where we can trivially check satisfiability.
If the graph splits into multiple components we can handle the components
separately and aggregate the results.


\section[Permissive SRB Detection Is FPT]{Permissive Strong Recursive Backdoor Detection to $\Cc_0$ Is FPT}

We start by formalizing and proving the observations made in \cref{sec:sketch}.


\begin{lemma}[Limited Clause Degree]\label[lemma]{limited_clause_degree}
  For every incidence graph $G$ and integer $d$, if $\sbdtwz(G) \le d$, then $G$ has a maximal clause degree at most $d$, i.e.\ $G\in \Cc_d$.
\end{lemma}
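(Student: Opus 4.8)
The plan is to prove the contrapositive in spirit, but actually it is cleaner to argue directly by induction on $d$, following the recursive structure of $\sbdtwz$. I want to show: if $G$ has some clause $c$ with $|\var(c)| \geq d+1$, then $\sbdtwz(G) \geq d+1$. The base case $d = 0$ is immediate: if $G$ contains a clause of width $\geq 1$, i.e.\ any non-isolated clause vertex, then $G \notin \Cc_0$, so $\sbdtwz(G) \geq 1$ by the first case of \Cref{def_sbd}.

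For the inductive step, suppose $G$ contains a clause $c$ with $|\var(c)| \geq d+1$, and consider which case of \Cref{def_sbd} applies to $G$. If $G \in \Cc_0$ then it has no edges at all, contradicting the existence of $c$; so this case does not occur. If $G$ is disconnected, then $c$ lies entirely within one connected component $H$ (a clause vertex and all its neighbors are connected), and $H$ still contains the width-$(d+1)$ clause $c$; by the induction hypothesis $\sbdtwz(H) \geq d+1$, and since $\sbdtwz(G)$ is the max over components, $\sbdtwz(G) \geq d+1$. If $G$ is connected and $G \notin \Cc_0$, then $\sbdtwz(G) = 1 + \min_{x \in \var(G)} \max_{\star} \sbdtwz(G[x_\star])$. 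The key observation is that assigning a single variable $x$ reduces the width of $c$ by at most one: if $x \notin \var(c)$ then $c$ survives unchanged in both $G[x_+]$ and $G[x_-]$ (it is not satisfied, since $c$ contains no literal on $x$, and no literal is removed); if $x \in \var(c)$, say $x_\star \in c$, then in $G[x_{-\star}]$ the clause $c$ loses exactly the literal $x_\star$ and is not satisfied, so a clause of width $\geq d$ survives. Either way, for every choice of branching variable $x$ there is a polarity $\star$ such that $G[x_\star]$ contains a clause of width $\geq d$, hence by the induction hypothesis $\max_\star \sbdtwz(G[x_\star]) \geq d$, and therefore $\sbdtwz(G) \geq 1 + d$.

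The only genuinely delicate point — and the one I would state carefully — is the "max over polarities" interplay: we do not get to choose the polarity, so we must exhibit, for each branching variable, the bad polarity that keeps a wide clause alive. For $x \notin \var(c)$ both polarities work; for $x_\star \in c$ the surviving polarity is $\star$ itself (assigning $x \mapsto \star$ satisfies $c$ and destroys it, so we must instead pick $x \mapsto -\star$, which is the literal-removal case). I would double-check the edge case where $c$ has width exactly $d+1$ and $x \in \var(c)$: then the surviving clause has width exactly $d$, which is still $\geq d$, so the induction hypothesis applies at level $d$. No other cases arise, so the induction goes through and we conclude $\sbdtwz(G) \leq d \implies G$ has maximal clause degree at most $d$, i.e.\ $G \in \Cc_d$.
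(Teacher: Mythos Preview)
Your argument is the contrapositive of the paper's and hinges on the same observation: assigning a single variable $x$ lowers the width of a fixed clause $c$ by at most one for at least one polarity, so a wide clause forces depth to accumulate. One small slip: in the disconnected case you appeal to ``the induction hypothesis'' to obtain $\sbdtwz(H) \geq d+1$, but your induction is on $d$ and the hypothesis is only available at level $d-1$; the fix --- which is what the paper does --- is to pass immediately to the connected component $H$ containing $c$ (using $\sbdtwz(G) \geq \sbdtwz(H)$ from the max-over-components clause of \Cref{def_sbd}) and then run only the connected case there.
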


\begin{proof}
    Proof by induction on $d$.

    \inductionBaseCase
    $d = 0$. If $\sbdtwz(G) \le 0$, then $G \in \Cc_0$.


    \inductionStep
    Let $d$ be an integer and $G$ an incidence graph with $\sbdtwz(G) \le d+1$.
    Let $c$ be any clause in $G$.
    Let $H$ be the connected component of $c$ in $G$.
    By assumption, $\sbdtwz(H) \le d+1$,
    and there must be a variable vertex $x$ such that
    for every literal $x_\star$ we have $\sbdtwz(H[x_\star]) \le d$.
    By induction, we also have that $H[x_\star] \in \Cc_d$.

    \algCase{1}
    $x$ is not connected to $c$.
    Then $c$ is still intact in $H[x_\star]$ and by induction contains at most $d$ variables.

    \algCase{2}
    $x$ is connected to $c$ by an edge with polarity $+$.
    Then $c$ still exists in $H[x_-]$ and by induction has degree at most $d$ in $H[x_-]$. 
    Therefore $c$ contains at most $d+1$ variables in $G$.

    \algCase{3}
    The case that $x$ is connected to $c$ by an edge with polarity $-$ is 
    analogous to \emph{Case~2}.

    \noindent In all cases, $c$ contains at most $d+1$ variables in $G$, and therefore $G\in\Cc_{d+1}$.
\end{proof}


\begin{lemma}[Low Diameter]\label[lemma]{low_diameter}
    Let $G$ be a incidence graph.
    If either $\sbd_{\Cc_0}(G) \leq k$ or \mbox{$\wbd_{\Cc_0}(G) \leq k$}, then every connected component of $G$ has a diameter of at most $4 \cdot 2^k - 4$.
\end{lemma}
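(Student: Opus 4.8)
The plan is to prove the bound by induction on $k$, and it suffices to treat connected graphs, since for a disconnected graph both $\sbd_{\Cc_0}$ and $\wbd_{\Cc_0}$ are defined as the maximum of the respective value over the connected components, so the hypothesis $\sbd_{\Cc_0}(G)\le k$ (resp.\ $\wbd_{\Cc_0}(G)\le k$) descends to every component. For the base case $k=0$ the hypothesis forces $G\in\Cc_0$, which is edgeless, so every component is a single vertex of diameter $0=4\cdot 2^0-4$.

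For the inductive step fix a connected $H$ with $\sbd_{\Cc_0}(H)\le k$ or $\wbd_{\Cc_0}(H)\le k$; we may assume $H$ has an edge (otherwise it is a single vertex and we are done), so $|V(H)|\ge 2$, $\var(H)\neq\emptyset$, and every vertex of $H$ has a neighbour. Unfolding the recursive definition in the ``connected, not in $\Cc_0$'' case, there is a variable $x\in\var(H)$ and a polarity $\star$ for which the recursive sub-instance $H[x_\star]$ satisfies the corresponding depth bound $\le k-1$ (in the strong case both polarities work, but one suffices for us). Writing $D:=4\cdot 2^{k-1}-4$, the induction hypothesis applied to $H[x_\star]$ gives that every connected component of $H[x_\star]$ has diameter at most $D$. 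The structural fact we exploit is that, up to deletion of variable vertices that become isolated, $H[x_\star]$ is exactly $H$ with $x$ and the clauses $N_\star[x]$ (those satisfied by $x\mapsto\star$) removed, while the clauses of $N_\diamond[x]$ survive, where $\diamond$ is the opposite polarity.

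The core claim is that $d_H(w,x)\le D+2$ for every vertex $w\neq x$ of $H$; the lemma then follows since $d_H(u,v)\le d_H(u,x)+d_H(x,v)\le 2(D+2)=4\cdot 2^k-4$ for all $u,v\in V(H)$. To prove the claim, first dispose of the trivial case $w\notin V(H[x_\star])$: then $w\in N_\star[x]$ or $w$ is a variable all of whose clauses lie in $N_\star[x]$, and in either case $w$ is within distance $2$ of $x$. Otherwise let $K'$ be the component of $H[x_\star]$ containing $w$ and $K$ the component of $H-x$ containing $w$ (so $K'\subseteq K$, and $K$ contains a neighbour of $x$ since $H$ is connected). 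If $K'$ meets $N_\diamond[x]$, pick $c\in K'\cap N_\diamond[x]$: it is adjacent to $x$ and $d_{H[x_\star]}(w,c)\le D$, so $d_H(w,x)\le D+1$. If not, then $K'$ contains no neighbour of $x$ at all (its vertices in $N_\star[x]$ were deleted and it meets no $N_\diamond[x]$), hence $K'\subsetneq K$; connectivity of $K$ yields an edge from some $p\in K'$ to a vertex $c'$ of $K\setminus K'$, and $c'$ cannot lie in another component of $H[x_\star]$ (that would merge the two components), so $c'$ was deleted, i.e.\ $c'\in N_\star[x]$, and is therefore adjacent to $x$; thus $d_H(w,x)\le d_{H[x_\star]}(w,p)+2\le D+2$.

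I expect the main obstacle to be the weak case: unlike $\sbd_{\Cc_0}$, the definition of $\wbd_{\Cc_0}$ commits to only one of the two branches $H[x_+],H[x_-]$, so the argument cannot appeal to diameter bounds in both branches simultaneously. This is precisely why the reachability analysis routes every vertex of $H$ to $x$ through the single ``good'' branch $H[x_\star]$, at the cost of the weaker per-level bound $d_H(w,x)\le D+2$ rather than $D+1$; correspondingly the recurrence $D_k=2(D_{k-1}+2)$ with $D_0=0$ is exactly what produces the closed form $D_k=4\cdot 2^k-4$. A secondary nuisance, handled above, is the bookkeeping around clause vertices that survive an assignment with an emptied literal set and variable vertices that vanish because all their clauses were satisfied; neither affects distances among the vertices the argument actually touches.
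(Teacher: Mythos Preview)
Your proof is correct and reaches the same bound, but the route differs from the paper's. The paper argues by contradiction on a long shortest path: if $H$ had diameter $\ge 4\cdot 2^{k+1}-3$, take a shortest path $P$ of that many vertices; any branching variable $y$ can be adjacent to at most two clause vertices on $P$ (at distance $2$, by shortestness), so assigning $y_\star$ removes at most three consecutive vertices of $P$, and one of the two surviving sub-paths still has $\ge 4\cdot 2^k-2$ vertices, contradicting the inductive bound on $H[y_\star]$. Your argument is instead a direct ``hub'' bound: you show $d_H(w,x)\le D_{k-1}+2$ for every $w$, by locating $w$'s component $K'$ in $H[x_\star]$ and bridging from $K'$ back to $x$ either through a surviving $\diamond$-neighbour of $x$ or through a deleted $\star$-clause on the boundary of $K'$. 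The paper's version is a short counting argument on a single path; yours is a bit more structural and, as you note, makes explicit why a single ``good'' branch suffices (hence why the weak case goes through), and why the recurrence $D_k=2(D_{k-1}+2)$ is the right one.

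One small point worth tightening: when you conclude that the boundary vertex $c'\in K\setminus K'$ must lie in $N_\star[x]$, you should also rule out that $c'$ is a variable deleted for becoming isolated. This is immediate from bipartiteness: if $c'$ were such a variable then its neighbour $p$ would be a clause in $N_\star[x]$, contradicting $p\in K'\subseteq V(H[x_\star])$. With that line added, the argument is complete.
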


\begin{proof}
    Proof by induction on $k$.
    For brevity we write $\bd(G) \leq k$ for the fact that either $\wbd_{\Cc_0}(G) \leq k$ or $\sbd_{\Cc_0}(G) \leq k$.

    \algHeading{Base Case}
    $k = 0$.
    In this case, $G$ is edgeless, and the statement holds.


    \algHeading{Induction Step}
    Assume towards a contradiction that $\bd(G) \leq k+1$ and that
    $G$ has a connected component $H$ of diameter at least $4 \cdot 2^{k+1} - 3$.
    Hence $H$ contains two vertices connected by a shortest path
    $P=(v_1,\ldots, v_m)$
    with $ m= 4 \cdot 2^{k+1} - 2$ (such that if $v_i$ is a clause vertex, then
    $v_{i+1}$ is a variable vertex, and if $v_i$ is a variable vertex, then $v_{i+1}$
    is a clause vertex).
    Also there exists a literal $y_\star$ such that $y$ is a variable vertex in $H$ 
    witnessing that $\bd(G) \le k+1$.
    Since $P$ is a shortest path from $v_1$ to $v_m$, $y$ can only be connected to 	at most $2$ clauses in $P$ at distance 2 from each other.
    Let $v_{i-1}$ and $v_{i+1}$ be the two clauses connected to $y$ (the reasoning also works with only one or zero such $v_j$).
    Now assigning $y_\star$ can split $P$ by deleting $v_{i-1}$ and $v_{i+1}$.
    We then have that $G[y_\star]$ still contains $(v_1,...,v_{i-2})$
    and $(v_{i+2},...,v_m)$ as shortest paths.
    One of those two paths will include at least
    \[
    \left \lceil \frac{m - 3}{2} \right \rceil
    =
    \left \lceil \frac{4 \cdot 2^{k+1} - 2 - 3}{2} \right \rceil
    =
    \left \lceil 4 \cdot 2^{k} - \frac{5}{2} \right \rceil
    = 4 \cdot 2^{k} - 2
    \]
    vertices.
    One component of $H[y_\star]$ therefore has a diameter of at least $4 \cdot 2^{k} - 3$.
    This contradicts the fact that, by induction, $\bd(H[y_\star])\le k$. Therefore we have $\bd(G)\le k+1$.
\end{proof}
In the rest of the paper, we use $\lk = 4 \cdot 2^k$ for brevity.

\subsection{Obstruction-Trees}


We now turn to the concept of obstruction-trees. We first define them and then
prove some of their properties.
The main property, proved in \cref{ots_obstruct}, is that the existence of such trees witnesses
a lower bound for the depth of a strong recursive backdoor to the class~$\Cc_0$.

\begin{definition}[Obstruction-Trees and Destroy Neighborhoods]\label[definition]{def-obs-tree}
  For all integers $k,d$ and incidence graphs $G$ in $\Cc_d$, we inductively for $i \geq d$ define 
  the notion of an {\em $(i,d,k)$-obstruction-tree} $T$ of $G$
  with {\em elements} $\el(T)$ 
  and {\em destroy-neighborhood}~$\nd G T$.
  We use $\cla(T)$ and $\var(T)$ to denote the clauses and variables of $\el(T)$.

  \medskip
  \noindent For a set $T = \{c,x_1,...,x_d\}$, where $c$ is a $d$-clause of $G$ 
  with $\var(c)=\{x_1,\ldots,x_d\}$,
  we have:
  \begin{enumerate}
    \item $T$ is a $(d,d,k)$-obstruction-tree.
    \item $\el(T)$ are the elements of $T$.
    \item $\nd{G}{T} := \var(T)=\{x_1,\ldots, x_d\}$.
  \end{enumerate}

  \noindent Inductively, for a triple $T=(T_1,P,T_2)$ where $T_1$ and $T_2$
  are $(i,d,k)$-obstruction-trees of $G$
  such that $\nd G {T_1} \cap \nd G {T_2} = \emptyset$, and $P$ is a path of length at most $\lambda_k$ connecting~$T_1$ and~$T_2$, we have:
  \begin{enumerate}
    \item $T$ is an $(i+1,d,k)$-obstruction-tree.
    \item $\el(T) := \el(T_1) \cup \el(P)\cup \el(T_2)$.
    \item $\nd G{T} := \var(T) \cup \{x : \text{there exist } c_1,c_2 \in \cla(T)\text{with $\{x,c_1\}\in E_+$ and $\{x,c_2\}\in E_-$}\}$.
  \end{enumerate}
\end{definition}

We now show, that our definition of a destroy neighborhood is a small set of variables, 
that shields the obstruction-tree from the rest of the graph.
Remember that $\lk = 4\cdot 2^k$.

\begin{proposition}[$N^\dagger$ Is Small]\label[proposition]{nd_is_small}
  For all integers $i,d,k$ with $d \leq k$, for every incidence graph $G$ in $\Cc_d$, and every $(i,d,k)$-obstruction-tree $T$ of $G$,
  we have $|\el(T)| \le 3^{i-d}\cdot \lk$ and $|\nd{G}{T}| \leq 3^{i - d} \cdot \lk \cdot d$.
\end{proposition}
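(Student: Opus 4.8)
The plan is to proceed by induction on $i$, following exactly the inductive structure of \cref{def-obs-tree}, proving both bounds $|\el(T)| \le 3^{i-d}\cdot \lk$ and $|\nd{G}{T}| \le 3^{i-d}\cdot\lk\cdot d$ simultaneously.

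First I would handle the base case $i = d$. Here $T = \{c, x_1, \ldots, x_d\}$ for a $d$-clause $c$, so $|\el(T)| = d + 1$ and $|\nd{G}{T}| = |\var(T)| = d$. Since $3^{i-d} = 3^0 = 1$ and $\lk = 4\cdot 2^k \ge 4$, and since $d \le k$ forces nothing problematic here, the bounds $d+1 \le \lk$ and $d \le \lk\cdot d$ hold trivially (note $d+1 \le 4 \le \lk$ when $d \le 3$, and more generally one checks $d + 1 \le 4\cdot 2^k$ using $d \le k$; this is a routine estimate I would not belabor).

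For the inductive step, let $T = (T_1, P, T_2)$ be an $(i+1,d,k)$-obstruction-tree, where $T_1, T_2$ are $(i,d,k)$-obstruction-trees and $P$ is a path of length at most $\lk$. For the element bound, $\el(T) = \el(T_1)\cup\el(P)\cup\el(T_2)$, so $|\el(T)| \le |\el(T_1)| + |\el(P)| + |\el(T_2)| \le 2\cdot 3^{i-d}\lk + |\el(P)|$. A path of length at most $\lk$ has at most $\lk + 1$ vertices, and I would bound $|\el(P)| \le \lk + 1 \le 3^{i-d}\lk$ (valid since $3^{i-d} \ge 1$ and $\lk \ge 1$, so $3^{i-d}\lk \ge \lk \ge \lk + 1$ fails marginally — so I would instead use the cleaner bound that $P$ contributes at most $\lk$ new vertices by not double-counting the endpoints already in $\el(T_1), \el(T_2)$, or simply absorb the $+1$ by the slack in $\lk$). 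This gives $|\el(T)| \le 3\cdot 3^{i-d}\lk = 3^{(i+1)-d}\lk$. For the destroy-neighborhood bound, $\nd G T = \var(T) \cup X$, where $X$ is the set of variables adjacent to some clause of $\cla(T)$ with a positive edge and to some clause of $\cla(T)$ with a negative edge. Here $\var(T) \subseteq \var(T_1)\cup\var(P)\cup\var(T_2)$ and $X$ consists of variables each incident to at least two distinct clauses in $\cla(T)$; since $|\cla(T)| \le |\el(T)| \le 3^{(i+1)-d}\lk$ and $G \in \Cc_d$ bounds clause width but not clause degree, I must instead count: every variable in $X$ is a variable of some clause in $\cla(T)$, hence $|X| \le \sum_{c\in\cla(T)} |\var(c)| \le d\cdot|\cla(T)| \le d\cdot 3^{(i+1)-d}\lk$. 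Combined with $|\var(T)| \le |\el(T)| \le 3^{(i+1)-d}\lk \le d\cdot 3^{(i+1)-d}\lk$ (when $d \ge 1$; the case $d = 0$ is degenerate since then there are no clauses and both sides are $0$), we conclude $|\nd G T| \le d\cdot 3^{(i+1)-d}\lk$, completing the induction.

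The main obstacle I anticipate is the bookkeeping around the destroy-neighborhood: one must be careful that $X$ is controlled not by the clause \emph{degree} (which is unbounded — that is the whole point of recursive backdoors) but by counting variables \emph{through} the clauses of the obstruction-tree, using that each such clause has width at most $d$ since $G \in \Cc_d$. A secondary subtlety is the $+1$ slack when bounding path vertices; the clean fix is to observe that $P$ connects $T_1$ and $T_2$, so its endpoints lie in $\el(T_1) \cup \el(T_2)$ and $P$ contributes at most $\lk - 1$ genuinely new vertices, which comfortably fits. Everything else is a direct unwinding of \cref{def-obs-tree} together with $\lk = 4\cdot 2^k$ and $d \le k$.
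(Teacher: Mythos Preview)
Your approach is essentially the paper's: induct on $i$ for the element bound, and bound $|\nd G T|$ by counting variables through the clauses of $T$ using $G\in\Cc_d$. The paper streamlines this by not carrying the second bound through the induction at all; it simply observes once that $\nd G T \subseteq \var(T)\cup\bigcup_{c\in\cla(T)}\var(c)$, so $|\nd G T|\le |\var(T)|+d\,|\cla(T)|\le d\,(|\var(T)|+|\cla(T)|)=d\,|\el(T)|$ (for $d\ge 1$; the case $d=0$ is trivial), and then only $|\el(T)|\le 3^{i-d}\lk$ is proved inductively. Your observation about the endpoints of $P$ already lying in $\el(T_1)\cup\el(T_2)$ is actually more careful than the paper, which just writes $|\el(P)|\le\lk$.

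There is one real slip in your inductive step for $|\nd G T|$: from $|X|\le d\cdot 3^{(i+1)-d}\lk$ and $|\var(T)|\le d\cdot 3^{(i+1)-d}\lk$ you cannot conclude $|\var(T)\cup X|\le d\cdot 3^{(i+1)-d}\lk$; as stated this only yields twice the target. The fix is exactly the paper's one-line observation: every variable in $\var(T)$ is itself adjacent to some clause of $\cla(T)$ (in the base case the $x_j$ are adjacent to $c$; on a path each variable vertex sits between clause vertices; endpoints of $P$ lie in $T_1$ or $T_2$), so in fact $\var(T)\subseteq\bigcup_{c\in\cla(T)}\var(c)$ and $|\nd G T|\le d\,|\cla(T)|\le d\,|\el(T)|$ directly, with no separate induction needed.
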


\begin{proposition}[$N^\dagger$ Is a Destroy Neighborhood]\label[proposition]{nd_is_a_destroy_neighborhood}
    For all integers $i,d,k$,
    every incidence graph $G$ in $\Cc_d$,
    every $(i,d,k)$-obstruction-tree $T$ of $G$,
    and every variable $x$ of $G$, 
    if $x \notin \nd{G}{T}$, then $T$ is also
    an $(i,d,k)$-obstruction-tree in at least one of $G[x_+]$ and $G[x_-]$.
\end{proposition}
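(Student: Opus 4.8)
The plan is to prove this by induction on $i$, following the inductive structure of the definition of obstruction-trees. The key insight is that assigning a variable $x \notin \nd{G}{T}$ with a well-chosen polarity preserves all of the structure needed for $T$ to remain a valid obstruction-tree: the $d$-clauses at the leaves of the tree, the connecting paths, and the disjointness of destroy-neighborhoods of the subtrees.

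\textbf{Base case.} Suppose $T = \{c, x_1, \ldots, x_d\}$ is a $(d,d,k)$-obstruction-tree, so $c$ is a $d$-clause with $\var(c) = \{x_1, \ldots, x_d\} = \nd{G}{T}$. Let $x \notin \nd{G}{T}$, so $x \notin \var(c)$. Then $x$ is not connected to $c$ in $G$, so in both $G[x_+]$ and $G[x_-]$ the clause $c$ survives unchanged with all $d$ of its variables. Hence $T$ is a $(d,d,k)$-obstruction-tree in $G[x_+]$ (and in $G[x_-]$); either choice of polarity works.

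\textbf{Inductive step.} Suppose $T = (T_1, P, T_2)$ is an $(i+1,d,k)$-obstruction-tree, with $T_1, T_2$ being $(i,d,k)$-obstruction-trees, $\nd{G}{T_1} \cap \nd{G}{T_2} = \emptyset$, and $P$ a path of length at most $\lk$ connecting them. Let $x \notin \nd{G}{T}$. First I would observe that $\nd{G}{T} \supseteq \nd{G}{T_1} \cup \nd{G}{T_2}$ — this is immediate from the definition, since $\var(T) \supseteq \var(T_1) \cup \var(T_2)$ — so $x \notin \nd{G}{T_1}$ and $x \notin \nd{G}{T_2}$. By the induction hypothesis, there is a polarity $\star_1$ with $T_1$ an obstruction-tree in $G[x_{\star_1}]$, and a polarity $\star_2$ with $T_2$ an obstruction-tree in $G[x_{\star_2}]$. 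The crux is that I need a \emph{single} polarity that works simultaneously for $T_1$, $T_2$, and the path $P$. This is where the second clause of the destroy-neighborhood definition does its work: $x \notin \nd{G}{T}$ means there do \emph{not} exist $c_1, c_2 \in \cla(T)$ with $\{x,c_1\} \in E_+$ and $\{x,c_2\} \in E_-$; equivalently, all edges from $x$ to clauses of $T$ (including clauses in $T_1$, $T_2$, and on $P$) have the \emph{same} polarity, say $\star$ (or $x$ has no such edges at all, in which case any polarity works). Then assigning $x \mapsto \star$ satisfies every clause of $T$ that $x$ touches — wait, that would delete clauses of $T$. Let me reconsider: I want to assign $x$ to the polarity that does \emph{not} match, i.e.\ $x \mapsto \bar\star$, so that every clause of $T$ incident to $x$ merely loses the literal $x_{\bar\star}$... but $x$ is not \emph{in} those clauses with that polarity. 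The point is: if all edges from $x$ into $\cla(T)$ are positive, then setting $x \mapsto -$ leaves every clause of $T$ intact (it only had $x$ positively, and $x$ is set negatively, so the literal $x_+$ is removed but the clause is not satisfied by $x$ — and no $d$-clause of $T$ contained $x$ since $x \notin \var(T_i) \subseteq \nd{G}{T_i}$, so their widths are unaffected). Thus all leaf $d$-clauses keep width exactly $d$, all clauses on $P$ survive so $P$ is intact, and the disjointness $\nd{G[x_{\bar\star}]}{T_1} \cap \nd{G[x_{\bar\star}]}{T_2} = \emptyset$ follows because destroy-neighborhoods only shrink under vertex deletion and induced-subgraph restriction (fewer clauses, fewer edges). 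Hence $T$ is an $(i+1,d,k)$-obstruction-tree in $G[x_{\bar\star}]$.

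\textbf{Main obstacle.} The delicate point, and the one I would be most careful about, is the simultaneity: the induction hypothesis gives a good polarity for $T_1$ and a possibly different good polarity for $T_2$, and naively these could conflict. The resolution is that the hypothesis is actually stronger than stated in spirit — what the base case and induction really establish is that a variable outside the destroy-neighborhood either is not adjacent to any clause of $T$ at all, or is adjacent to all of them with one uniform polarity, and in either case the \emph{opposite} (or arbitrary) polarity preserves the whole structure. So rather than black-boxing the induction hypothesis, I would either strengthen the inductive statement to ``there is a polarity $\star$ such that assigning $x \mapsto \star$ deletes no clause of $\cla(T)$ and deletes no variable of $\el(T)$, hence $T$ remains an obstruction-tree,'' or directly argue from the uniform-polarity characterization of $x \notin \nd{G}{T}$. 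The latter is cleaner: the whole statement reduces to the observation that $x \notin \nd{G}{T}$ forces the edges from $x$ to $\cla(T)$ to be monochromatic, and picking the complementary assignment preserves every clause in $\cla(T)$ and every variable in $\var(T)$, which is exactly enough to replay the recursive definition verbatim in $G[x_\star]$.
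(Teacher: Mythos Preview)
Your proposal is correct and converges to essentially the paper's argument. The paper isolates the key step as a separate lemma: if $x \notin \var(T)$ and $x$ has no $\star$-edge to any clause of $\cla(T)$, then $T$ survives as an obstruction-tree in $G[x_\star]$; this is proved by induction on $i$, and the proposition then follows by a two-line contrapositive (if $T$ fails in both $G[x_+]$ and $G[x_-]$, the lemma forces the existence of $c_1,c_2\in\cla(T)$ with opposite-polarity edges from $x$, so $x\in\nd{G}{T}$). Your ``cleaner'' route at the end --- $x\notin\nd{G}{T}$ means $x\notin\var(T)$ and the edges from $x$ into $\cla(T)$ are monochromatic, so the complementary assignment preserves every vertex of $\el(T)$ --- is exactly this lemma and its application fused together.

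Two small remarks. First, you correctly diagnose that the na\"ive black-box induction (a good polarity for $T_1$, a possibly different one for $T_2$) does not work; the paper avoids this trap precisely by proving the stronger lemma rather than the proposition itself inductively. Second, your claim $\nd{G}{T}\supseteq\nd{G}{T_1}\cup\nd{G}{T_2}$ is true but needs both $\var(T)\supseteq\var(T_j)$ \emph{and} $\cla(T)\supseteq\cla(T_j)$, since the second clause of the destroy-neighborhood definition quantifies over $\cla(T_j)$.
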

Due to space constraints, both proofs were moved to \cref{nd_is_small_proof}
and \cref{nd_is_a_destroy_neighborhood_proof}.
We now turn to the main property of obstruction-trees, 
which explains why this notion is relevant in this context.

\begin{proposition}[Obstruction-Trees Obstruct]\label[proposition]{ots_obstruct}
    For all integers $i,d,k$ and every incidence graph $G$ in $\Cc_d$,  
    if there is an
    $(i,d,k)$-obstruction-tree $T$ of $G$, then $\sbdtwz(G) \geq i$.
\end{proposition}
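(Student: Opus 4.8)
The plan is to prove the statement by induction on $i \geq d$, following the recursive structure of the definition of obstruction-trees.

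\textbf{Base case ($i = d$).} Here $T = \{c, x_1, \ldots, x_d\}$ where $c$ is a $d$-clause of $G$. Since $G \in \Cc_d$ contains a clause of width exactly $d$, the contrapositive of \Cref{limited_clause_degree} shows that $\sbdtwz(G) \geq d = i$: if $\sbdtwz(G) \leq d-1$ we would have $G \in \Cc_{d-1}$, contradicting the presence of the $d$-clause $c$.

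\textbf{Induction step.} Suppose $T = (T_1, P, T_2)$ is an $(i+1,d,k)$-obstruction-tree, so $T_1, T_2$ are $(i,d,k)$-obstruction-trees of $G$ with $\nd{G}{T_1} \cap \nd{G}{T_2} = \emptyset$ and $P$ is a path of length at most $\lk$ connecting them. Assume towards a contradiction that $\sbdtwz(G) \leq i$; I want to derive $\sbdtwz(T_j) \leq i-1$ for some $j$ with $T_j$ still an obstruction-tree in the reduced graph, contradicting the induction hypothesis. First, all elements of $T$ — in particular the clause vertices of $T_1$, $T_2$ and the vertices of $P$ — lie in a single connected component $H$ of $G$, because $P$ connects $T_1$ to $T_2$ (and each $T_j$ is itself connected through its defining path structure, which should be noted as a small auxiliary observation, or folded into the definition's meaning of ``connecting''). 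Since $\sbdtwz(H) \leq \sbdtwz(G) \leq i$ and $H$ is connected, by \Cref{def_sbd} there is a variable $x \in \var(G)$ with $\max_{\star}\sbdtwz(H[x_\star]) \leq i - 1$. Now I split on where $x$ lies. If $x \notin \nd{G}{T_1}$, then by \Cref{nd_is_a_destroy_neighborhood} $T_1$ remains an $(i,d,k)$-obstruction-tree in $G[x_+]$ or $G[x_-]$; call this graph $G'$, and note $T_1 \subseteq H' := $ the relevant component of $H[x_\star]$ (an obstruction-tree lives in one component since it is connected), so $T_1$ is an $(i,d,k)$-obstruction-tree of $H'$ with $\sbdtwz(H') \leq i-1$, contradicting the induction hypothesis applied to $T_1$. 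Symmetrically if $x \notin \nd{G}{T_2}$. Since $\nd{G}{T_1}$ and $\nd{G}{T_2}$ are disjoint, $x$ cannot lie in both, so one of these two cases always applies, completing the contradiction.

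\textbf{Main obstacle.} The delicate point is the bookkeeping around connectivity and components: \Cref{nd_is_a_destroy_neighborhood} guarantees that $T_j$ survives as an obstruction-tree in $G[x_\star]$, but to invoke the induction hypothesis I need it to be an obstruction-tree of the particular \emph{connected component} $H'$ of $H[x_\star]$ that contains it, and I need $\sbdtwz(H') \leq i-1$, which follows from $\sbdtwz(H[x_\star]) \leq i-1$ together with the ``otherwise'' (component) case of \Cref{def_sbd}. I should check that obstruction-trees are genuinely connected subgraphs — true in the base case (a clause plus its variables is a star) and preserved inductively since $P$ connects $T_1$ and $T_2$ — so that each obstruction-tree is confined to a single component at every stage; this is the lemma that makes all the component-passing arguments go through cleanly. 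A secondary subtlety is verifying that \Cref{nd_is_a_destroy_neighborhood} is applicable with the right parameters, namely that $G \in \Cc_d$ and hence $G[x_\star] \in \Cc_d$ as well (monotonicity of $\Cc_d$ under assignment, since $G[x_\star]$ is an induced subgraph of $G$), so that ``$(i,d,k)$-obstruction-tree of $G[x_\star]$'' even makes sense.
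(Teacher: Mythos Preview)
Your proof is correct and follows essentially the same route as the paper: induction on $i$, base case via \Cref{limited_clause_degree}, and in the induction step pass to the connected component $H$ containing $T$, pick the witnessing variable $x$, and use the disjointness of $\nd{G}{T_1}$ and $\nd{G}{T_2}$ together with \Cref{nd_is_a_destroy_neighborhood} to keep one of the $T_j$ alive after the assignment. The paper states ``Since $\el(T)$ is connected'' without further comment, so your explicit remark that obstruction-trees are connected (star in the base case, joined by $P$ inductively) is a welcome clarification; on the other hand, your extra pass to the sub-component $H'$ of $H[x_\star]$ is unnecessary, since the induction hypothesis applies to arbitrary (not just connected) graphs and already yields $\sbdtwz(H[x_\star]) \geq i$ directly.
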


\begin{proof}
    Proof by induction on $i$.

    \inductionBaseCase
    $i = d$.
    Follows immediately from \cref{limited_clause_degree}.


    \inductionStep
    Let $T$ be an $(i+1,d,k)$-obstruction-tree of $G$, and assume towards a contradiction
    that $\sbdtwz(G) < i+1$.
    By \cref{def-obs-tree} we get $T_1$ and $T_2$, two $(i,d,k)$-obstruction-trees connected by a path $P$.
    Additionally in every connected component~$H$ of~$G$,
    $\sbdtwz(H) < i+1$ holds as well.
    Since $\el(T)$ is connected, there exists one component~$H$ containing~$T$.
    Then, there must exists a variable $x$ in $H$ 
    such that $\sbdtwz(H[x_+]) < i$ and $\sbdtwz(H[x_-]) < i$.
    Assume $x \notin \nd{H}{T_1}$.
    Then by \cref{nd_is_a_destroy_neighborhood} we get that either~$T_1$ remains an $(i,d,k)$-obstruction-tree in either
    $H[x_+]$ or $H[x_-]$.
    We use the induction hypothesis to conclude that one of the graphs
    has strong recursive backdoor depth at least $i$ and we get a contradiction.
    Assume $x \in \nd{H}{T_1}$.
    Then $x \notin \nd{H}{T_2}$ by \cref{def-obs-tree} and we can make the same argument.
\end{proof}
Finally, and for technical reasons, we need to show that if we assign a variable,
we do not increase the strong recursive backdoor depth.
Also if we find an obstruction-tree after assigning some variables,
then it is also an obstruction-tree in the original graph with the same destroy neighborhood. 
\begin{lemma}[srbd$_{\Cc_0}$ Is Closed Under Assignments]\label[lemma]{srbd_closed_under_assignments}
    For every integer $k$, every incidence graph~$G$, and  every literal $x_\star$ in $G$, 
    if $\sbdtwz(G) \leq k$, then $\sbdtwz(G[x_\star]) \leq k$.
\end{lemma}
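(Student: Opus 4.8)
The plan is to prove this by induction on $k$, following the structure of \cref{def_sbd}. Given $G$ with $\sbdtwz(G) \le k$, I want to show $\sbdtwz(G[x_\star]) \le k$ for an arbitrary literal $x_\star$. The base case $k = 0$ is immediate: if $G \in \Cc_0$ then $G$ is edgeless, and $G[x_\star]$ is an induced subgraph of $G$, hence also edgeless, so $\sbdtwz(G[x_\star]) = 0$.

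For the inductive step, I would argue componentwise. Since $G[x_\star]$ is an induced subgraph of $G$ obtained by deleting $x$ together with some of its neighbors (the clauses containing the opposite-polarity literal) and removing satisfied clauses/literals, every connected component $H'$ of $G[x_\star]$ is contained in a single connected component $H$ of $G$, and $H' = H[x_\star]$ if $x \in \var(H)$, or $H' = H$ if $x \notin \var(H)$ (up to the trivial case where $H$ disappears entirely). Because $\sbdtwz$ of a disconnected graph is the max over its components, it suffices to show $\sbdtwz(H[x_\star]) \le k$ for each component $H$ of $G$ with $\sbdtwz(H) \le k$. If $x \notin \var(H)$ this is trivial. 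Otherwise $H \notin \Cc_0$ (since $x$ has a neighbor) and $H$ is connected, so by \cref{def_sbd} there is a variable $y \in \var(H)$ with $\sbdtwz(H[y_\diamond]) \le k-1$ for both $\diamond \in \{+,-\}$.

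The case split is on whether $y = x$ or $y \neq x$. If $y = x$, then $\sbdtwz(H[x_\star]) \le k - 1 \le k$ and we are done. If $y \neq x$, I would apply the induction hypothesis to each graph $H[y_\diamond]$ with the literal $x_\star$ — noting that $x$ is still a variable of at least one of $H[y_+], H[y_-]$ in general, but more carefully, if $x$ has been assigned or eliminated in $H[y_\diamond]$ then $\sbdtwz((H[y_\diamond])[x_\star])$ should be read as $\sbdtwz(H[y_\diamond])$ restricted appropriately; the clean statement is that $(H[y_\diamond])[x_\star] = (H[x_\star])[y_\diamond]$ whenever both are defined, since assignments to distinct variables commute. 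Thus by induction $\sbdtwz((H[x_\star])[y_\diamond]) = \sbdtwz((H[y_\diamond])[x_\star]) \le k-1$ for both $\diamond$, so the variable $y$ (if still present in $H[x_\star]$) witnesses $\sbdtwz(H[x_\star]) \le 1 + (k-1) = k$ via the second case of \cref{def_sbd}, after possibly first descending through component nodes. If $y$ is no longer present in $H[x_\star]$ — i.e.\ $x_\star$ satisfied all clauses of $y$ — then $\var(y) \cap \var(H[x_\star]) = \emptyset$ and $H[x_\star]$ is an induced subgraph of $H[y_\diamond]$ for the relevant $\diamond$, so $\sbdtwz(H[x_\star]) \le \sbdtwz(H[y_\diamond]) \le k-1$ — here I would want a small monotonicity sublemma, or simply observe $H[x_\star] = (H[y_\diamond])[x_\star]$ still and invoke induction.

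The main obstacle I anticipate is bookkeeping the commutativity of assignments and the several degenerate cases (the assigned variable vanishing, a component disappearing, $x \notin \var(H)$), rather than any conceptual difficulty; the only genuine content is the commutation identity $(G[x_\star])[y_\diamond] = (G[y_\diamond])[x_\star]$ for $x \neq y$, which follows directly from the definition of $\phi[\tau]$ since removing satisfied clauses and trimming literals are operations that do not interfere across distinct variables. I would state that identity explicitly (or as an inline remark) and then let the induction run smoothly.
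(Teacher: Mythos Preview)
Your proposal is correct and follows essentially the same argument as the paper: induction on $k$, reduce to a connected component containing $x$, take the witness variable $y$ from \cref{def_sbd}, split on $y=x$ versus $y\neq x$, and in the latter case use commutativity of assignments $(H[y_\diamond])[x_\star]=(H[x_\star])[y_\diamond]$ together with the induction hypothesis. You are in fact more careful than the paper's own proof, which glosses over the degenerate cases (e.g.\ $y\notin\var(H[x_\star])$) that you explicitly discuss.
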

\begin{proposition}[Obstruction-Trees Can Be Lifted] \label[proposition]{ots_can_be_lifted}
    For all integers $i,d,k$ with $d \leq i$ and $d \leq k$, every incidence graph $G$ in $\Cc_d$,
    every obstruction-tree $T$ and every literal $x_\star$ of $G$, 
    if~$T$ is an $(i,d,k)$-obstruction-tree of $H := G[x_\star]$, then
    it is also an $(i,d,k)$-obstruction-tree of $G$ and we have $\nd{G}{T} = \nd{H}{T}$.
\end{proposition}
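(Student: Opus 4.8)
The plan is to argue by induction on $i$, following the recursive structure of \cref{def-obs-tree}. The key point is that $H = G[x_\star]$ is an induced subgraph of $G$ (as noted in the preliminaries), so every vertex, edge, and polarity present in $H$ is present in $G$ with the same data; the only thing that can happen when passing from $H$ to $G$ is that $G$ has \emph{additional} vertices and edges. So the task is to check that (a) the defining conditions of an obstruction-tree that hold in $H$ still hold in $G$, and (b) the destroy-neighborhood, which is defined purely in terms of the elements $\el(T)$ and the edges incident to them, is computed to the same value in $G$ as in $H$.

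For the base case $i = d$, $T = \{c, x_1, \dots, x_d\}$ where $c$ is a $d$-clause of $H$ with $\var(c) = \{x_1,\dots,x_d\}$. Since $H$ is an induced subgraph of $G$, $c$ is still a clause of $G$ and its variable set is unchanged — here I would invoke that $G \in \Cc_d$ and $d \le i = d$ to note $c$ cannot have grown, but in fact the clause $c$ as a vertex of the incidence graph has a fixed set of incident edges; passing to an induced subgraph can only delete incident edges, and passing back restores exactly those, so $\var_G(c) = \{x_1,\dots,x_d\}$ as well. Hence $T$ is a $(d,d,k)$-obstruction-tree of $G$, and $\nd{G}{T} = \var(T) = \nd{H}{T}$ by definition. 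For the inductive step, $T = (T_1, P, T_2)$ with $T_1, T_2$ being $(i,d,k)$-obstruction-trees of $H$, $\nd{H}{T_1} \cap \nd{H}{T_2} = \emptyset$, and $P$ a path of length at most $\lk$ connecting them in $H$. By the induction hypothesis, $T_1$ and $T_2$ are $(i,d,k)$-obstruction-trees of $G$ with $\nd{G}{T_j} = \nd{H}{T_j}$; in particular the disjointness condition transfers verbatim. The path $P$ exists in $G$ as well, since $H \subseteq G$ is an induced subgraph, with the same length. So $T = (T_1,P,T_2)$ is a legal $(i+1,d,k)$-obstruction-tree of $G$. Finally, $\el(T) = \el(T_1) \cup \el(P) \cup \el(T_2)$ is the same set in both graphs, and $\nd{G}{T} = \var(T) \cup \{x : \exists\, c_1,c_2 \in \cla(T),\ \{x,c_1\} \in E_+,\ \{x,c_2\} \in E_-\}$; since the clauses $\cla(T) \subseteq \el(T)$ are fixed vertices and their incident edges (together with polarities) are identical in $G$ and in the induced subgraph $H$ — again, an induced subgraph retains all edges between retained vertices — the set on the right is the same whether computed in $G$ or in $H$. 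Hence $\nd{G}{T} = \nd{H}{T}$.

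The one subtlety I would be careful about is the direction of the edge-set comparison: when we form $H = G[x_\star]$ we delete clauses satisfied by $x_\star$ and shorten the surviving clauses by removing the literal $x_{\bar\star}$; the latter deletes \emph{edges} incident to $x$, not to the clauses in $\cla(T)$ — but if some $c \in \cla(T)$ had $x$ as a neighbor in $G$ via the ``wrong'' polarity, that edge is gone in $H$, yet $x \notin \var(c)$ in $H$ could differ from $\var(c)$ in $G$. However, this cannot actually occur for clauses in $\cla(T)$: by \cref{limited_clause_degree} and the hypothesis, or more simply because $T$ being an obstruction-tree of $H$ means every clause in $\cla(T)$ still has exactly its prescribed width $d$ in $H$, and since $c$ survives into $H$ at full width it was never shortened, so its incident edges (and their polarities) are identical in $G$ and $H$. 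This is the step I expect to require the most care in writing up cleanly; everything else is a routine unfolding of the definition, driven by the fact that taking an induced subgraph and then reverting only adds structure and never modifies the structure on the retained vertices.
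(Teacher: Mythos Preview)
Your inductive skeleton matches the paper's, and the part showing that $T$ remains an obstruction-tree in $G$ is fine. The gap is in your argument for $\nd{G}{T}=\nd{H}{T}$, specifically the inclusion $\nd{G}{T}\subseteq\nd{H}{T}$.

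You assert that ``every clause in $\cla(T)$ still has exactly its prescribed width $d$ in $H$'' and hence was never shortened when passing from $G$ to $H$. This is false: only the leaf clauses of the obstruction-tree are required to be $d$-clauses; clauses lying on the connecting paths $P$ may have any width up to $d$. For such a path clause $c$ it is entirely possible that $x$ was adjacent to $c$ in $G$ with the opposite polarity $\bar\star$; then $c$ survives into $H$ but loses the edge to $x$, so its incident edge set is \emph{not} identical in $G$ and $H$. Your ``induced subgraph retains all edges between retained vertices'' observation does not help here, since $x$ is precisely the one vertex that is not retained.

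The paper's fix is more direct and does not attempt to control all incident edges. One argues that the only variable that can possibly lie in $\nd{G}{T}\setminus\nd{H}{T}$ is $x$ itself (every other variable is retained, and edges between retained vertices are preserved). Then one rules out $x\in\nd{G}{T}$: certainly $x\notin\var(T)$ since $\var(T)\subseteq H$; and if $x$ were positively adjacent to some $c_1\in\cla(T)$ and negatively to some $c_2\in\cla(T)$ in $G$, then the assignment $x_\star$ would satisfy and delete one of $c_1,c_2$, contradicting that both belong to $\cla(T)\subseteq H$. This closes the gap without any claim about clause widths on the paths.
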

The proofs of these statements can be found in  \cref{srbd_closed_under_assignments_proof} and \cref{ots_can_be_lifted_proof}.

\subsection{Algorithms}

We finally turn to the algorithm. 
We first show that we can efficiently compute an
obstruction-tree, or make progress towards computing a strong recursive backdoor to $\Cc_0$.
Making progress here means decreasing the clause degree of the graph.
At every step, the algorithm may stop if it concludes that $\sbdtwz(G)>k$.


\begin{proposition}[Obstruction-Trees Are Easy to Compute]\label[proposition]{finding_ots_is_fpt}
  There is an algorithm that, given three integers $i,d,k$, with $d\le i\le k+1$,
  and an incidence graph $G$ in $\Cc_d$, in time $2^{2^O(k)}\cdot |G|$ either
  \begin{enumerate}
      \item returns an $(i,d,k)$-obstruction-tree $T$, or
      \item returns a strong recursive backdoor $B$ to $\mathcal{C}_{d-1}$ of depth at most $g(i,d,k):=3^{i-d} \cdot \lk \cdot d$, or
      \item concludes that $\sbdtwz(G) > k$.
  \end{enumerate}
\end{proposition}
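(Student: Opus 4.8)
I would prove \Cref{finding_ots_is_fpt} by induction on $i$, following the recursive structure of \Cref{def-obs-tree}. The overall strategy is: at each level, either we can build an obstruction-tree from smaller ones, or the failure to do so exposes a small destroy-neighborhood that can be used as a backdoor reducing the clause degree. The running time bound $2^{2^{\Oof(k)}}\cdot|G|$ should come from the fact that the objects we manipulate (obstruction-trees and their destroy-neighborhoods) have size bounded by $3^{i-d}\cdot\lk\cdot d \le 2^{2^{\Oof(k)}}$ by \Cref{nd_is_small}, so we can afford to enumerate over families of them as long as the family size stays singly exponential in that bound.

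\textbf{Base case $i=d$.} Here an $(d,d,k)$-obstruction-tree is just a $d$-clause together with its $d$ variables. Scan $G$ for a clause of width exactly $d$; since $G\in\Cc_d$, either one exists (return it, case~1) or every clause has width $\le d-1$, i.e.\ $G\in\Cc_{d-1}$, so the empty backdoor works (case~2, depth $0\le g(d,d,k)$). This step is linear.

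\textbf{Inductive step $i\to i+1$.} Assume the algorithm for $i$. To find an $(i+1,d,k)$-obstruction-tree we must find two $(i,d,k)$-obstruction-trees $T_1,T_2$ with disjoint destroy-neighborhoods joined by a path of length $\le\lk$. The natural approach: repeatedly invoke the level-$i$ algorithm to collect a maximal "packing" of $(i,d,k)$-obstruction-trees with pairwise-disjoint destroy-neighborhoods — after extracting one tree $T_1$, delete its destroy-neighborhood $\nd{G}{T_1}$ (which is legitimate: by \Cref{nd_is_a_destroy_neighborhood}, assigning these variables preserves obstruction-trees elsewhere, and by \Cref{srbd_closed_under_assignments} does not raise $\sbdtwz$, while \Cref{ots_can_be_lifted} lets us lift any tree found later back to $G$) and recurse on the remainder. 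Two outcomes. If at some point the level-$i$ algorithm returns a backdoor to $\Cc_{d-1}$ or concludes $\sbdtwz > k$, we propagate that (cases 2,3); if it concludes $\sbdtwz>k$ on a subgraph obtained by assigning $\le g(i,d,k)$ variables, then since $\sbdtwz$ is monotone under assignment (\Cref{srbd_closed_under_assignments}) — wait, that gives the wrong direction, so instead we note that the assigned variables together with that conclusion still yield $\sbdtwz(G) > k$ because a backdoor of $G$ of depth $\le k$ would, after the assignments, give depth $\le k$ on the subgraph. If instead we accumulate many disjoint-destroy-neighborhood trees, then by \Cref{low_diameter} any component with more than one of them contains two of them at distance $\le \lk$ (a path we find by BFS), yielding an $(i+1,d,k)$-obstruction-tree (case~1). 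The remaining case is that we obtain a bounded number (at most $\lk$-many, say, per component, controlled via the diameter bound and destroy-neighborhood sizes) of such trees and no two are close enough: then the union $B$ of all their destroy-neighborhoods is a set of $\le g(i,d,k)\cdot(\text{bounded count}) \le g(i+1,d,k)$ variables, and I claim assigning all of $B$ kills all width-$d$ clauses — because any surviving $d$-clause would itself be an $(d,d,k)$-obstruction-tree disjoint from our packing, contradicting maximality; so $B$ is a strong recursive backdoor to $\Cc_{d-1}$ (case~2). Its depth is just $|B|$ since assigning variables one at a time each drop depth by at most… actually a backdoor of "depth" here means the set used as successive variable-assignments, so depth $=|B|\le g(i+1,d,k)$.

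\textbf{The main obstacle.} The delicate part is bookkeeping the interaction between the three return types across the recursion — in particular ensuring that the backdoor depths composed across levels stay $\le g(i,d,k)$ and that conclusions of $\sbdtwz>k$ obtained on assigned subgraphs correctly transfer back to $G$. One must be careful that "strong recursive backdoor to $\Cc_{d-1}$ of depth $\le g$" is the right object: it should be a sequence/tree of variable assignments, and combining "assign $\nd{G}{T_1}$" with a recursively-found backdoor requires that \Cref{ots_can_be_lifted} and \Cref{srbd_closed_under_assignments} let us both lift found trees and bound depths additively. The second subtlety is the running time: each level does a bounded number (depending only on $k$) of calls to the previous level, so the recursion tree has depth $\le k+1$ and branching $\le 2^{2^{\Oof(k)}}$, giving $2^{2^{\Oof(k)}}\cdot|G|$ overall — but one must check the path-search and destroy-neighborhood computations are genuinely linear in $|G|$ with only the $k$-dependent factor multiplicative, not in the exponent of $|G|$.
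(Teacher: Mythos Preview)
Your high-level plan is close to the paper's, but the inductive step has a genuine gap in the construction of the backdoor (case~2).

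The problem is how you handle the destroy-neighborhood of the first tree $T_1$. You propose to ``delete'' $\nd{G}{T_1}$ and recurse, building a maximal packing. But ``deleting'' a set of variables is not a well-defined operation on incidence graphs: to remove a variable you must \emph{assign} it, and there are $2^{|\nd{G}{T_1}|}$ possible assignments. Your packing argument implicitly fixes one of them, and the backdoor you extract at the end only certifies something for that particular assignment. In particular, your claim that ``assigning all of $B$ kills all width-$d$ clauses because any surviving $d$-clause would contradict maximality'' fails on two counts: maximality of your packing concerns $(i,d,k)$-obstruction-trees as returned by the recursive call, not arbitrary $d$-clauses; and even if the level-$i$ call finds no tree under one assignment of $B$, a different assignment of $B$ may well leave a $d$-clause intact. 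A \emph{strong} recursive backdoor to $\Cc_{d-1}$ must put every leaf into $\Cc_{d-1}$, i.e.\ work for every assignment of its variables.

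The paper closes this gap by enumerating all $2^{|\nd{G}{T_1}|}$ assignments $\tau$ and running the level-$i$ algorithm on each $G[\tau]$. If \emph{some} $\tau$ yields a second tree $T_2$, lift it to $G$ via \Cref{ots_can_be_lifted} (it is automatically disjoint from $\nd{G}{T_1}$); a single BFS then either gives a short connecting path (returning the $(i{+}1)$-tree) or witnesses large diameter (returning $\sbdtwz(G)>k$ via \Cref{low_diameter}). If \emph{some} $\tau$ yields $\sbdtwz(G[\tau])>k$, then $\sbdtwz(G)>k$ by \Cref{srbd_closed_under_assignments}. Only when \emph{every} $\tau$ yields a backdoor $B_\tau$ of depth $\le g(i,d,k)$ do you build the backdoor for $G$: first a complete binary tree branching on the variables of $\nd{G}{T_1}$, then $B_\tau$ attached at the leaf corresponding to $\tau$. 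The depth is $|\nd{G}{T_1}|+g(i,d,k)\le 2\,g(i,d,k)\le g(i{+}1,d,k)$. No packing is needed: one tree $T_1$ together with this exhaustive branching already suffices. Your ``remaining case'' (a bounded packing with no two trees close) cannot arise anyway, since any two trees in the same connected component are at distance $\le\lk$ unless the diameter bound of \Cref{low_diameter} is violated.
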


\begin{proof}
  We fix $d, k$ and prove the claims by induction on $i$ and $|G|$.

  \inductionBaseCase
  When $i=d$, we search for a clause $c$ connected to $d$ variables.
  If we find $c$, then~$c$ is a $(d,d,k)$-obstruction-tree and we return it.
  If there is no such clause, then $G$ is also in~$\Cc_{d-1}$,
  and the leaf node labeled $G$ is a strong recursive backdoor to $\mathcal{C}_{d-1}$.

  \algHeading{Induction Step on $i$}
  We now fix $i$ and assume that have a working algorithm with parameters $(i,d,k)$ for any incidence graph $G \in \Cc_d$.
  We now explain by induction on $|G|$ how to build an algorithm with parameters $(i+1,d,k)$.

  \inductionBaseCase
  In the base case $|G|=1$,
  then $G\in\Cc_0$, and there is nothing to do.

  \algHeading{Induction Step on $|G|$ when $G$ is not connected}
  All connected components of $G$ have size strictly smaller than $G$,
  and we can run the algorithm with parameters $(i+1,d,k)$ on each.
  If in one connected component $H$, we find an $(i+1,d,k)$-obstruction-tree $T$,
  then $T$ is also an $(i+1,d,k)$-obstruction-tree of $G$ and we are done.
  If for one connected component $H$ we have $\sbdtwz(H) > k$,
  then it also holds that $\sbdtwz(G) > k$.
  Finally, if for every connected component $H$ we find
  a strong recursive backdoor $B_H$ to $\mathcal{C}_{d-1}$ of depth at most $g(i+1,d,k)$,
  we can merge them to build a recursive backdoor $B$ to $\mathcal{C}_{d-1}$ for $G$.
  In order to do this, we insert a root node labeled $G$, whose children are all the $B_H$.
  Since we do not insert a variable node, $B$ has still depth at most $g(i+1,d,k)$.

  \algHeading{Induction Step on $|G|$ when $G$ is connected}
  In this case, we use the induction hypothesis on $G$ with parameters $(i,d,k)$.
  If the algorithm provides a strong recursive backdoor or concludes that
  $\sbdtwz(G) > k$, we are done. We focus on the case where the algorithm returns an $(i,d,k)$-obstruction-tree $T$
  for $G$ and $\nd G T$.
  We define $\Tt$ as the set of all possible truth assignments to the variables of $\nd G T$.
  For every $\tau$ in $\Tt$, we define $H_\tau:= G[\tau]$.
  On every $H_\tau$ we run the algorithm given by induction
  with parameters $(i,d,k)$.

  \algCase{1}
  For at least one $H_{\tau}$
  we have that $\sbdtwz(H_{\tau})>k$.
  By \cref{srbd_closed_under_assignments}, $\sbdtwz(G)>k$ follows.

  \algCase{2}
  For at least one $H_{\tau}$
  we find an $(i,d,k)$-obstruction-tree $T_2$.
  By \cref{ots_can_be_lifted}, $T_2$ is also an $(i,d,k)$-obstruction-tree in $G$ and $\nd{H_\tau}{T_2} = \nd{G}{T_2}$.
  Since none of the variables in $\nd G {T_1}$ appear in $H_\tau$,
  we have that $\nd G{T_1}\cap \nd G{T_2} = \emptyset$.
  We run a BFS algorithm to find a shortest path $P$ between a vertex of $T_1$ and $T_2$ in $G$.
  If $P$ has length greater than $\lk$,
  we can conclude that 
  $\sbdtwz(G)>k$ by \cref{low_diameter}.
  If $P$ has length at most $\lk$,
  we have that $(T_1,P,T_2)$ is an $(i+1,d,k)$-obstruction-tree in $G$.

  \algCase{3}
  For every $H_{\tau}$
  we find a strong \recursiveBackdoor $B_{\tau}$
  for to $\Cc_0$ of depth at most $g(i,d,k)$.
  In this case, we can combine them and build a strong \recursiveBackdoor $B$ of $G$ to $\mathcal C_{d-1}$.
  To do so, we start with the complete binary tree, where at each step we branch over one variable in $\nd G {T_1}$.
  At depth $|\nd{G}{T_1}|$, each node corresponds to an assignment $\tau$ of $\Tt$.
  We then finish the tree by plugging 
  $B_\tau$ in the branch corresponding to $\tau$.
  $B$ is a strong \recursiveBackdoor of $G$ to $\mathcal C_{d-1}$, and its depth is bounded by:
  $|\nd G {T_1}| + g(i,d,k) \leq g(i+1,d,k)$.

\algHeading{Time Complexity}
The proof for the time complexity can be found in \cref{finding_ots_is_fpt_tc}.
\end{proof}
We now use the result of \cref{finding_ots_is_fpt} sufficiently many times so that the degree of the input graph reaches $0$.
Again, at any point, the algorithm may stop and conclude that $\sbdtwz(G)>k$. 
Remember that $\lk=4\cdot 2^k$.
\begin{theorem}\label[theorem]{find_srb}
  There is an algorithm that, given as input an integer $k$ and an incidence graph $G$,
  in time $2^{2^{\bigO{k}}} \cdot |G|$ either:
  \begin{enumerate}
    \item returns a strong recursive backdoor of $G$ to $\mathcal \Cc_0$ of depth at most $3^{k} \cdot \lk \cdot k^2$, or
    \item concludes that $\sbdtwz(G) > k$.
  \end{enumerate}
\end{theorem}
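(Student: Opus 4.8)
The plan is to iterate the algorithm of \Cref{finding_ots_is_fpt} to peel off one level of clause degree at a time. We start with the input graph $G$. If $G$ already has maximal clause degree $0$, then $G\in\Cc_0$ and the single leaf node labeled $G$ is a strong recursive backdoor of depth $0$, so we return it. Otherwise let $d$ be the maximal clause degree of $G$; note $d \le k+1$, since if $d > k+1$ then $G \notin \Cc_{k+1}$, so by the contrapositive of \Cref{limited_clause_degree} we have $\sbdtwz(G) > k+1 > k$ and we may stop with output (2). (In fact it suffices to check whether there is a clause of width $> k$; if so, output (2).) So assume $d \le k+1$ and $G \in \Cc_d$.

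The main loop runs \Cref{finding_ots_is_fpt} with parameters $(i,d,k)$ where we take $i := \min(d+1, k+1)$ — i.e. $i=d+1$ unless $d=k+1$, in which case we use $i=d=k+1$; the hypothesis $d \le i \le k+1$ is then satisfied. The algorithm of \Cref{finding_ots_is_fpt} returns one of three things. In case (3) it concludes $\sbdtwz(G) > k$ and we are done with output (2). In case (1) it returns an $(i,d,k)$-obstruction-tree $T$; since $i \ge d+1 > d$ — unless $d=k+1$, in which case a $(d,d,k)$-obstruction-tree already exists and, wait, more carefully: an $(i,d,k)$-obstruction-tree with $i \geq d$ always witnesses $\sbdtwz(G) \ge i$ by \Cref{ots_obstruct}; choosing $i = \min(d+1,k+1)$, whenever $d \le k$ we get $i = d+1$ and hence $\sbdtwz(G) \ge d+1$; since $d \le k$ we cannot immediately conclude $> k$, so instead we must avoid case (1) entirely. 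The cleaner way is to run \Cref{finding_ots_is_fpt} with $i := k+1$ directly, provided $d \le k+1$: then in case (1) the returned obstruction-tree is a $(k+1,d,k)$-obstruction-tree, which by \Cref{ots_obstruct} gives $\sbdtwz(G) \ge k+1 > k$, so we may again output (2). Thus with the choice $i := k+1$ only cases (1)$\Rightarrow$(2) and (3)$\Rightarrow$(2) and (2) are possible, and in case (2) of \Cref{finding_ots_is_fpt} we obtain a strong recursive backdoor $B$ of $G$ to $\Cc_{d-1}$ of depth at most $g(k+1,d,k) = 3^{k+1-d}\cdot\lk\cdot d \le 3^{k}\cdot\lk\cdot k$ (using $d \ge 1$ so $3^{k+1-d} \le 3^{k}$, and $d \le k$).

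Now we recurse. Each leaf of $B$ is labeled by a formula in $\Cc_{d-1}$; we call the algorithm recursively on each such leaf formula with the same parameter $k$, obtaining (by induction on $d$) either a conclusion that some leaf has $\sbdtwz > k$ — in which case, since $\sbdtwz$ of a subformula appearing in a strong recursive backdoor of $G$ is a lower bound related to $\sbdtwz(G)$ via \Cref{srbd_closed_under_assignments} and the component rule, this forces $\sbdtwz(G) > k$ and we output (2) — or a strong recursive backdoor of depth at most $3^{k}\cdot\lk\cdot k\cdot(d-1)$ for each leaf. Splicing these into the leaves of $B$ yields a strong recursive backdoor of $G$ to $\Cc_{d-2}$ of depth at most $\bigl(3^{k}\cdot\lk\cdot k\bigr) + \bigl(3^{k}\cdot\lk\cdot k\cdot(d-1)\bigr) = 3^{k}\cdot\lk\cdot k\cdot d$. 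Iterating from $d$ down to $0$ (at most $k+1$ levels, but the depth telescopes to at most $3^{k}\cdot\lk\cdot k\cdot k = 3^{k}\cdot\lk\cdot k^2$) gives a strong recursive backdoor of $G$ to $\Cc_0$ of the claimed depth. For the running time: we invoke \Cref{finding_ots_is_fpt} a total of at most $2^{k}\cdot|G|$ times (once per node produced across all recursion levels, bounded using \Cref{srbs_have_few_leafs} since the intermediate backdoors have depth $\bigO{3^k\cdot\lk\cdot k^2} = 2^{\bigO{k}}$ and hence at most $2^{2^{\bigO{k}}}\cdot|G|$ leaves of total size $2^{2^{\bigO{k}}}\cdot|G|$), each call costing $2^{2^{\bigO{k}}}\cdot|G|$, for a total of $2^{2^{\bigO{k}}}\cdot|G|$.

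The main obstacle I expect is bookkeeping the depth bound cleanly through the recursion on $d$: one must make sure that splicing a depth-$g_1$ backdoor to $\Cc_{d-1}$ with depth-$g_2$ backdoors at its leaves (to $\Cc_{d-2}$) genuinely adds depths rather than multiplying, which is exactly the content of the "Induction Step on $|G|$ when $G$ is not connected" / component-merging argument in \Cref{finding_ots_is_fpt} (inserting component nodes, not variable nodes, at the splice points), and then verifying the telescoped sum $\sum_{j=1}^{d} 3^{k}\cdot\lk\cdot k \le 3^{k}\cdot\lk\cdot k^2$. A secondary subtlety is the correctness of the "$\sbdtwz$ of a leaf formula exceeds $k$ $\Rightarrow$ $\sbdtwz(G) > k$" implication, which follows by composing \Cref{srbd_closed_under_assignments} (for the variable-branches above the leaf) with the observation that $\sbdtwz$ of a graph is at least the $\sbdtwz$ of any of its connected components (immediate from \Cref{def_sbd}).
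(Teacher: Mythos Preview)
Your approach is correct and essentially identical to the paper's: check $d \le k$ via \Cref{limited_clause_degree}, call \Cref{finding_ots_is_fpt} with $i=k+1$, recurse on the $\Cc_{d-1}$ leaves, and splice (the paper phrases the depth invariant as $3^k\cdot\lk\cdot d^2$ rather than your $3^k\cdot\lk\cdot k\cdot d$, but both yield the stated final bound). One slip to fix: where you write ``a strong recursive backdoor of $G$ to $\Cc_{d-2}$'' you mean $\Cc_0$, since the backdoors you splice into the leaves are, by the induction hypothesis on $d$, already backdoors all the way to $\Cc_0$.
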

\begin{proof}
  Let $d$ be the maximal degree of a clause in $G$. If $d>k$ conclude that $\sbdtwz(G) > k$ by \cref{limited_clause_degree}.
  Otherwise handle $G$ using induction on $d$ to search for a \SRB to $\Cc_0$ of
  depth at most $3^{k} \cdot \lk \cdot d^2$:

  \inductionBaseCase
  $G \in \Cc_0$ and the node labeled $G$ is a \SRB to $\Cc_0$ of depth $0$.

  \inductionStep
  $G \in \Cc_{d+1}$.
  Run the algorithm presented in  \cref{finding_ots_is_fpt} with parameters 
  \mbox{$(k+1,d+1,k)$} on $G$.
  If it concludes that $\sbdtwz(G) > k$, or returns a $(k+1,d+1,k)$-obstruction-tree,
  conclude that $\sbdtwz(G) > k$ by \cref{ots_obstruct}.
  If a \SRB $B$ is returned,
  then $B$ will have depth at most $3^{k+1-(d+1)} \cdot \lk \cdot (d+1) \leq 3^{k} \cdot \lk \cdot (d+1)$
  and every leaf of~$B$ will be labeled with a graph $H$ in $\Cc_{d}$.

  We then apply the algorithm given by the induction hypothesis to every $H$.
  If for one $H$ we get that $\sbdtwz(H) > k$,
  conclude that $\sbdtwz(G) > k$ by \cref{srbd_closed_under_assignments}.
  If for every $H$ we get a \SRB $B_H$ to
  $\Cc_0$ of depth at most  $3^{k} \cdot \lk \cdot d^2$,
  we use the results to build a \SRB to $\Cc_0$ for $G$.
  To do so, we replace the leaf labeled $H$ in $B$ with $B_H$ for every $H$.
  As a result, $B$ will be extended to be a \SRB
  for $G$ to $\Cc_0$ with depth at most $3^{k} \cdot \lk \cdot (d+1)^2$.

  \algHeading{Time Complexity}
  The proof for the time complexity can be found in \cref{find_srb_tc}.
\end{proof}

\begin{corollary}
    Given a formula $\phi$ and a parameter $k$
    there is an algorithm that solves $\textsc{SAT}(\sbdtwz)$ in time $2^{2^{\bigO{k}}} \cdot |\phi|$.
\end{corollary}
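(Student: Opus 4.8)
The plan is to compose \cref{find_srb} with \cref{sat_using_srbds}. Given a formula $\phi$ and a parameter $k$, I would first run the algorithm of \cref{find_srb} on the incidence graph $G = G_\phi$ with parameter $k$; by that theorem this takes time $2^{2^{\bigO{k}}} \cdot |\phi|$ and terminates in one of two ways. Either it reports $\sbdtwz(G) > k$, in which case we simply output this conclusion --- since $\sbdtwz(\phi) = \sbdtwz(G)$ under the identification of a formula with its incidence graph, this is a legitimate answer for the permissive problem $\textsc{SAT}(\sbdtwz)$ and we are done --- or it returns a strong recursive backdoor $T$ of $G$ to $\Cc_0$ of depth $D \le 3^{k} \cdot \lk \cdot k^{2}$.

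In the second case it remains to decide satisfiability of $\phi$, and here I would feed $T$ into the bottom-up procedure of \cref{sat_using_srbds}. This is applicable because $\Cc_0$ is trivially tractable: a member of $\Cc_0$ is unsatisfiable precisely when it contains an empty clause, so both membership in $\Cc_0$ and satisfiability of a member are decidable in linear time. Thus \cref{sat_using_srbds}, applied with the depth bound $D$ from the previous step, decides whether $\phi$ is satisfiable in time $2^{D} \cdot \mathrm{poly}(|\phi|)$.

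It then only remains to check that the two running times collapse to the claimed bound. Since $\lk = 4 \cdot 2^{k}$, we have $D = 4 k^{2} \cdot 6^{k} = 2^{\bigO{k}}$, hence $2^{D} = 2^{2^{\bigO{k}}}$; adding the linear-time cost of \cref{find_srb} gives total running time $2^{2^{\bigO{k}}} \cdot \mathrm{poly}(|\phi|)$, which sharpens to $2^{2^{\bigO{k}}} \cdot |\phi|$ once one observes, via \cref{srbs_have_few_leafs}, that for the trivial class $\Cc_0$ both the work on the leaves of $T$ and the bottom-up pass over $T$ cost only time linear in $|\phi|$ up to the $2^{2^{\bigO{k}}}$ factor. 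I do not anticipate any genuine difficulty here: all the substance lives in \cref{find_srb}, whose soundness rests on \cref{ots_obstruct} and \cref{limited_clause_degree}, and the only thing one must verify for the corollary proper is the elementary estimate $2^{3^{k} \cdot \lk \cdot k^{2}} = 2^{2^{\bigO{k}}}$.
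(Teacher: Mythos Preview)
Your proposal is correct and follows essentially the same approach as the paper: invoke \cref{find_srb} to either conclude $\sbdtwz(\phi)>k$ or obtain an \SRB of depth at most $3^{k}\cdot\lk\cdot k^{2}$ to $\Cc_0$, then plug that backdoor into \cref{sat_using_srbds} and verify the elementary estimate $2^{3^{k}\cdot\lk\cdot k^{2}}=2^{2^{\bigO{k}}}$. The only cosmetic difference is that you explicitly appeal to \cref{srbs_have_few_leafs} to tighten $\mathrm{poly}(|\phi|)$ to $|\phi|$, whereas the paper simply asserts the linear bound directly.
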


\begin{proof}
    We compute the satisfiability of $\phi$ in two steps.
    First run the algorithm given in \cref{find_srb} 
    with parameters $\phi$ and $k$ in time $2^{2^{\bigO{k}}} \cdot |\phi|$.
    If the algorithm concludes that $\sbdtwz(\phi) > k$ we are finished.
    Otherwise a \SRB with depth at most $3^{k} \cdot \lk \cdot k^2$ of $\phi$ to $\Cc_0$
    is returned.
    
    Second, we make use of the calculated \SRB by running the algorithm described in 
    \cref{sat_using_srbds}, to determine the satisfiability of $\phi$.
    The satisfiability of a formula in $\Cc_0$ can be checked in constant time:
    If it contains no clause, then all clauses are trivially satisfied. 
    If it contains at least one clause, then that clause is empty and unsatisfiable.
    Therefore the second step runs in time $\bigO{2^{3^{k} \cdot \lk \cdot k^2} \cdot |\phi|}$.
    Adding up the running times, we get a total time complexity of $2^{2^{\bigO{k}}} \cdot |\phi|$.
\end{proof}


\section[{WRB Detection Is W[2]-Hard}]{Weak Recursive Backdoor Detection to $\Cc_0$ Is $W[2]$-Hard}
In this section, we show that the parametrized problem of detecting a weak
recursive backdoor of depth $k$ to the class of edgeless graphs is $W[2]$-hard
when parametrized by $k$.

\newcommand{\WRCZBD}{\ensuremath{\textsc{WR-}\Cc_0\textsc{-BD}}}

\begin{theorem}\label[theorem]{wrbd_is_hard}
    $\textsc{Weak-Recursive-}\Cc_0\textsc{-Backdoor-Detection}$ is W[2]-hard.
\end{theorem}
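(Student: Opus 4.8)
The plan is to reduce from \textsc{Dominating Set}, the canonical W[2]-complete problem: given a graph $H$ and an integer $k$, decide whether $H$ has a dominating set of size $k$. I would build, in polynomial time, a CNF formula $\phi_{H,k}$ (equivalently, its incidence graph $G$) together with the target depth parameter (which should be $k$, or $k$ plus a small constant), such that $\phi_{H,k}$ is satisfiable and $\wbd_{\Cc_0}(G)\le k'$ if and only if $H$ has a dominating set of size $k$. The guiding intuition is that a weak recursive backdoor to $\Cc_0$ of depth $k'$ picks a short sequence of variables (branching on a single polarity each time, since $\wbd$ uses $\min$ over $\star$) whose successive restrictions kill all edges of the incidence graph, i.e. satisfy or empty every clause; I want ``satisfying/emptying all clauses by assigning $k$ variables'' to encode ``dominating all vertices of $H$ by picking $k$ vertices.''

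The construction I would use: for each vertex $v\in V(H)$ introduce a variable $x_v$, and for each vertex $u\in V(H)$ introduce a clause $c_u$ containing the positive literals $\{(x_v)_+ : v\in N_H[u]\}$. Setting $x_v\mapsto +$ then satisfies exactly the clauses $c_u$ with $v\in N_H[u]$, i.e. the clauses indexed by $N_H[v]$ — precisely the domination relation. So a size-$k$ dominating set $S$ of $H$ gives the assignment $\tau\colon x_v\mapsto +$ for $v\in S$ with $\phi[\tau]=\emptyset\in\Cc_0$, and since each branch of $\wbd$ needs only witness one polarity, one reads off a weak recursive backdoor of depth $k$ by always picking the next variable of $S$ (the component structure only helps — it can only let us use fewer branch steps per component). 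Conversely I must argue that any weak recursive backdoor of depth $\le k$ yields a dominating set of size $\le k$: here I need to prevent the backdoor from ``cheating'' by setting variables to $-$ (which removes literals but empties no clause, since all literals are positive and clauses are nonempty unless fully satisfied) or by exploiting splitting into components to dominate with fewer than $k$ vertices. Setting a variable negative only deletes edges incident to $x_v$ and never creates an empty clause that would certify $\Cc_0$-membership of a nonempty-clause component, so useful branch steps are exactly positive assignments; and collecting the positive assignments used along all root-to-leaf paths gives a set that must intersect $N_H[u]$ for every $u$, hence a dominating set — of size bounded by the depth $k$ because the branching tree only ever uses $\min$ over variables, so one root-to-leaf path already exhibits $\le k$ positive assignments covering its whole component, and distinct components of $G$ correspond (by the construction, if I make $H$ connected or add a gadget forcing $G$ connected) to a single component so a single path must dominate all of $H$.

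The main obstacle I expect is exactly this converse ``no cheating'' direction, and in particular the interaction between the component-splitting rule of $\wbd$ and the size bound. If $G$ can disconnect after a few assignments, then different components may be handled by disjoint sets of further variables, and the depth bounds only the \emph{longest} root-to-leaf chain, not the \emph{total} number of variables used — so naively the backdoor could dominate $H$ with a set larger than $k$, breaking the reduction. The fix I would pursue is to design the incidence graph so that it \emph{stays connected} until essentially all clauses are satisfied: e.g. add one extra apex variable $x_\star$ joined with positive polarity to every clause $c_u$, together with a small clause gadget forcing $x_\star$ to be assigned $-$ in any weak backdoor (so it cannot be used to satisfy everything at once) while still keeping $G$ connected; alternatively, pad so that after deleting the at most $k$ chosen variables the remaining incidence graph, if nonempty, is connected, forcing a single root-to-leaf path to carry all the domination work. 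Getting this gadget right — connected, satisfiable, W[2] parameter preserved, and genuinely forcing the ``positive dominating set'' behavior — is the delicate part; once it is in place, correctness in both directions and the polynomial-time, parameter-preserving bound are routine, establishing the W[2]-hardness claimed in \Cref{wrbd_is_hard}.
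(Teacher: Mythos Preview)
Your overall framework --- reduce from a W[2]-complete covering problem and encode ``$v$ covers $u$'' as a positive literal $x_v$ in the clause $c_u$ --- is exactly the paper's approach (they phrase it as \textsc{Set Cover}, which is the same reduction). You also correctly identify the one real obstacle: the component rule in $\wbd$ lets a backdoor of depth $k$ touch far more than $k$ variables, so the naive reduction is unsound in the $(\Leftarrow)$ direction.

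The gap is in your fix. A single apex variable $x_\star$, positively joined to all clauses and forced to $-$ by a unit clause $\{(x_\star)_-\}$, does \emph{not} work. Take $H$ to be a ``spider'': a center $c$ with leaves $\ell_1,\dots,\ell_m$, and a pendant $p_i$ attached to each $\ell_i$. The minimum dominating set of $H$ has size $m$. In your formula, assign $x_\star\mapsto -$ (depth $1$), then $x_c\mapsto +$ (depth $2$): this satisfies $c_c$ and every $c_{\ell_i}$, leaving only the pairwise-disjoint clauses $c_{p_i}=\{(x_{\ell_i})_+,(x_{p_i})_+\}$, each its own component, all handled in parallel at depth $3$. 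So $\wbd_{\Cc_0}\le 3$ regardless of $m$, and the reduction collapses. The point is that one apex only enforces connectivity until the moment it is assigned; after that you are back to the naked construction. Your alternative (``pad so that after deleting at most $k$ variables the rest stays connected'') is not specified and would have to survive \emph{assignments}, not deletions, which behave very differently.

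The paper's gadget is the missing idea: instead of one apex, introduce $k{+}2$ anchor variables $b_1,\dots,b_{k+2}$, each with its own unit clause $\beta_i=\{(b_i)_+\}$, and join every $b_i$ \emph{negatively} to every element clause $\sigma_u$. Now (i)~every $b_i$ must eventually be assigned $+$ (else $\beta_i$ becomes an empty, unsatisfiable clause), (ii)~there are $k{+}2$ such obligations but the target depth is $k{+}1$, so at least two of the $\beta_i$ must be handled in different components, and (iii)~as long as any $\sigma_u$ survives, all the $b_i$ and hence all the $\beta_i$ lie in one component via the negative $b_i$--$\sigma_u$ edges. Thus the only way to split is to satisfy \emph{every} $\sigma_u$ first, and since the $b_i$'s cannot be used for this (they occur only negatively in $\sigma_u$ and must be set $+$), a set of at most $k$ set-variables must cover $U$. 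The crucial trick you are missing is not ``keep the graph connected'' but ``create more mandatory sequential obligations than the depth budget allows, so that parallelism --- and hence prior full coverage --- is forced.''
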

Due to space constraints, the proof was moved to \cref{wrbd_is_hard_proof}.

\section{Conclusion}
We have proposed the new notions of \emph{strong} 
and \emph{weak recursive backdoors}, 
which exploit the structure of 
formulas that can be recursively split into independent 
parts by partial assignments.
Recursive backdoors are measured by their depth 
and can contain, even at bounded depth, 
an unbounded number of variables.
In our work we have focused on the tractable 
base class of empty formulas $\Cc_0$.
We have shown,
that detecting weak recursive backdoors to $\Cc_0$ is W[2]-hard.
Our main technical contribution is an fpt algorithm that 
detects strong recursive backdoors of bounded depth to $\Cc_0$.
Even for $\Cc_0$ this extends tractable SAT Solving 
to a new class of formulas.

Our result raises the question of whether the 
detection of strong recursive backdoors can be expanded
to larger base classes such as 2CNF or Horn.
Especially 2CNF seems to be in reach,
as similar to $\Cc_0$, incidence graphs of formulas 
with bounded recursive backdoor depth to 2CNF have a bounded clause degree.
This is the first ingredient for our algorithm to $\Cc_0$.
However our algorithm is limited to finding backdoors to $\Cc_0$, 
as the second ingredient, which is bounded incidence graph diameter, 
is not given when searching for backdoors to 2CNF.


\bibliography{ref}

\appendix

\section{Omitted Proofs}

\subsection[Proof of Proposition 5.4]{Proof of \cref{nd_is_small}}\label{nd_is_small_proof}
\begin{proof}
    The second claim easily follows the first one. The set $\nd{T}{G}$ contains $\var(T)$ and at most the variables connected
    to a clause from $\cla(T)$.
    Since $G \in \Cc_d$, we get that $|\nd{G}{T}| \leq |\el(T)| \cdot d$.
    Now we prove that $|\el(T)| \leq 3^{i-d} \cdot \lk$ by induction on $i$.

    \inductionBaseCase
    $T$ is an $(d,d,k)$-obstruction-tree. By definition, $|\el(T)| = |T| = d + 1 \leq 3^{d-d} \cdot \lk$.

    \inductionStep
    $T$ is an $(i+1,d,k)$-obstruction-tree.
    Then $T = (T_1,P,T_2)$ such that~$T_1$ and~$T_2$ are $(i,d,k)$-obstruction-trees of $G$
    and $|\el(T)| \leq |\el(T_1)| + |\el(P)| + |\el(T_2)|$.
    We apply our induction hypothesis to conclude that both
    $\el(T_1)$ and $\el(T_2)$ have at most $3^{i-d} \cdot \lk$ elements.
    $P$~has at most $\lk$ elements by definition.
    We conclude that $|\el(T)| \leq 3 \cdot 3^{i-d} \cdot \lk = 3^{i+1-d} \cdot \lk$.
\end{proof}

\subsection[Proof of Proposition 5.5]{Proof of \cref{nd_is_a_destroy_neighborhood}}\label{nd_is_a_destroy_neighborhood_proof}
In order to prove \cref{nd_is_a_destroy_neighborhood}, we first prove an intermediate result:
\begin{proposition}[Obstruction-Trees Are only Influenced by Adjacent Variables]\label[proposition]{ots_are_local}
    For all integers~$i,d,k$,
    every incidence graph $G$ in $\Cc_d$,
    every $(i,d,k)$-obstruction-tree $T$ of $G$,
    every variable $x$ in $G$,
    and every polarity $\star$, 
    if $x \notin \var(T)$ and for all $c \in \cla(T)$ we have $\{x,c\}\not\in E_\star$, then
    $T$ is still an $(i,d,k)$-obstruction-tree in $G[x_\star]$.
\end{proposition}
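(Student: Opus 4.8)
The plan is to argue by induction on $i$, following the recursive definition of obstruction-trees in \cref{def-obs-tree}. Two elementary facts from the preliminaries will be used throughout: $G[x_\star]$ is an induced subgraph of $G$, and passing from $G$ to $G[x_\star]$ only deletes clauses and literals, so the edge sets shrink, $E_\diamond^{G[x_\star]} \subseteq E_\diamond^{G}$ for each polarity $\diamond$. It is also worth recording that the combinatorial data $\el(T)$, $\cla(T)$, $\var(T)$ of an obstruction-tree are independent of the ambient graph; only $\nd{G}{T}$ depends on $G$.

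For the base case $i=d$ we have $T = \{c, x_1, \dots, x_d\}$ with $\var(c) = \{x_1,\dots,x_d\} = \var(T)$. Since $x \notin \var(T) = \var(c)$, assigning $x_\star$ removes no literal of $c$ and (as $x_\star \notin \lit(c)$) does not satisfy $c$, so $c$ survives in $G[x_\star]$ as a $d$-clause on the same $d$ variables, which in turn remain vertices; hence $T$ is still a $(d,d,k)$-obstruction-tree.

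For the induction step, write $T = (T_1, P, T_2)$ with $T_1, T_2$ obstruction-trees of $G$ satisfying $\nd{G}{T_1} \cap \nd{G}{T_2} = \emptyset$, and $P$ a connecting path of length at most $\lk$. Because $\var(T_j) \subseteq \var(T)$ and $\cla(T_j) \subseteq \cla(T)$, the hypotheses of the proposition hold for $x$, $\star$ and each $T_j$, so by the induction hypothesis $T_1$ and $T_2$ are obstruction-trees of $G[x_\star]$, with unchanged element and clause sets. It then remains to re-verify the two recombination conditions over $G[x_\star]$. Disjointness of the destroy-neighborhoods is immediate: from $E_\diamond^{G[x_\star]} \subseteq E_\diamond^{G}$ and the fact that $\cla(T_j)$ is unchanged one gets $\nd{G[x_\star]}{T_j} \subseteq \nd{G}{T_j}$, so $\nd{G[x_\star]}{T_1} \cap \nd{G[x_\star]}{T_2} = \emptyset$. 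For the path, every clause vertex of $P$ lies in $\cla(T)$, hence is not satisfied by $x_\star$ and survives; every variable vertex of $P$ is distinct from $x$ and remains incident to a surviving clause of $P$, so it survives as well (the degenerate cases, where $P$ is a single vertex or a single edge, are handled directly using the induction hypothesis on $T_1$ and $T_2$). Since $G[x_\star]$ is an induced subgraph of $G$, the path $P$ is therefore intact in $G[x_\star]$, of length at most $\lk$, and still connects $T_1$ and $T_2$. Hence $(T_1, P, T_2)$ is an $(i+1,d,k)$-obstruction-tree of $G[x_\star]$, completing the induction.

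I do not anticipate a real obstacle; the work is careful bookkeeping. The two things to watch are that the combinatorial data of the sub-trees is unchanged when passing to $G[x_\star]$ (the induction hypothesis supplies exactly this) and that $N^\dagger$ can only shrink when edges are deleted. The single place where the polarity hypothesis $\{x,c\} \notin E_\star$ does genuine work — rather than the weaker $x \notin \var(T)$ — is in guaranteeing that no clause on the path $P$ (which may be adjacent to variables outside $\el(T)$, and in particular possibly to $x$) becomes satisfied, and hence deleted, by the assignment $x_\star$.
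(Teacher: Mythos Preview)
Your proof is correct and follows essentially the same approach as the paper: induction on $i$, with the base case handled by $x\notin\var(c)$, and the inductive step re-verifying the three ingredients of the triple $(T_1,P,T_2)$ in $G[x_\star]$. Your write-up is in fact slightly more careful than the paper's, which simply asserts that $P$ ``still is a path of the same length in $G[x_\star]$''; you additionally spell out why variable vertices on $P$ do not disappear from $G[x_\star]$ and observe that the polarity hypothesis is only genuinely needed for the clauses of $P$.
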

\begin{proof}
    Proof by induction on $i$.

    \inductionBaseCase
    $i = d$.
    Then $T$ contains a $d$-clause $c$ and its adjacent variables $\var(T)$.
    If~$x\notin \var(T)$, then $T$ remains untouched and continues to be a
    $(d,d,k)$-obstruction-tree of $G[x_\star]$.

    \inductionStep
    $i>d$.
    Then $T = (T_1,P,T_2)$, where $T_1$ and $T_2$ are $(i,d,k)$-obstruction-trees of $G$.
    Assume $x \notin \var(T)$ and for all $c \in \cla(T)$ we have $\{x,c\}\not\in E_\star$.
    Since $\cla(T_1)$ and $\cla(T_2)$ are both subsets of $\cla(T)$ we
    can apply our induction hypothesis and conclude that~$T_1$ and$T_2$
    are still $(i,d,k)$-obstruction-trees of $G[x_\star]$.
    Since $G[x_\star]$ is a subgraph of $G$
    we know that $\nd{G[x_\star]}{T_1} \cap \nd{G[x_\star]}{T_2}$ is still empty.
    Since $x$ is not contained in $\var(P) \subseteq \var(T)$ and is also not connected to a clause of $\cla(P) \subseteq \cla(T)$ by polarity $\star$,
    we conclude that $P$ still is a path of the same length in $G[x_\star]$.
    It follows that $T$ must be $(i+1,d,k)$-obstruction-tree in~$G[x_\star]$.
\end{proof}
We now continue with the proof of \cref{nd_is_a_destroy_neighborhood}:
\begin{proof}
    Assume towards a contradiction that $x \notin \nd{G}{T}$ and
    $T$ is no $(i,d,k)$-obstruction-tree of $G[x_+]$ and $G[x_-]$.
    If $T$ is no $(i,d,k)$-obstruction-tree in $G[x_+]$,
    then by \cref{ots_are_local}, either $x \in \var(T)$ or there exists a clause $c_1$
    connected to $x$ by a positive edge.
    Since the former contradicts with $x \notin \nd{G}{T}$, we have that $c_1$ exists.
    Now assume that $T$ is also no $(i,d,k)$-obstruction-tree in $G[x_-]$.
    By the same reasoning conclude that there exists a clause~$c_2$
    connected to $x$ by a negative edge.
    From the existence of both $c_1$ and $c_2$ connected with different polarities to $x$
    we conclude that $x \in \nd{G}{T}$ and get a contradiction.
\end{proof}


\subsection[Proof of Lemma 5.7]{Proof of \cref{srbd_closed_under_assignments}}\label{srbd_closed_under_assignments_proof}
\begin{proof}
    Proof by induction on $k$.

    \inductionBaseCase
    $k=0$. Then $G$ is edgeless and remains edgeless when a variable is assigned.


    \inductionStep
    Let $G$ be an incidence graph such that $\sbdtwz(G)\le k+1$, and let $x_\star$ be any literal of $G$.
    If $G$ is connected, then by \cref{def_sbd} there exists a variable $y$ such that
    $\sbdtwz(G[y_\star]) \leq k$.
    If $x = y$, then $\sbdtwz(G[x_\star]) \leq k+1$ holds trivially.
    If $x \neq y$ then we apply our induction hypothesis
    and because $\sbdtwz(G[y_\star]) \leq k$ get that $\sbdtwz(G[y_\star,x_\star])
    \leq k$. This leads to $\sbdtwz(G[x_\star,y_\star]) \le k$,
    which again implies that $\sbdtwz(G[x_\star]) \leq k+1$ holds.
    If $G$ contains multiple components,
    then the same argument applies for the component that contains $x$
    and the other components remain unchanged.
\end{proof}

\subsection[Proof of Proposition 5.8]{Proof of \cref{ots_can_be_lifted}}\label{ots_can_be_lifted_proof}
\begin{proof}
    Proof by induction on $i$.

    \inductionBaseCase
    $i = d$. Then $T$ contains a $d$-clause $c$ and its variables $x_1,\ldots, x_d$ in $H$ and $\nd{H}{T}$ contains all $x_i$.
    Since $G$ has maximal clause degree $d$, $c$ must also be a $d$-clause in $G$ with the same neighborhood.

    \inductionStep
    Assume $T$ is an $(i + 1,d,k)$ obstruction-tree of $H$.
    Then $T = (T_1,P,T_2)$ such that $T_1$ and $T_2$ are $(i,d,k)$-obstruction-trees of $H$, and
    $P$ is a path of length at most $\lk$.
    By applying the induction hypothesis, we get that $T_1$ and $T_2$
    are also $(i,d,k)$-obstruction-trees of $G$ and
    that $\nd{G}{T_1} = \nd{H}{T_1}$ and $\nd{G}{T_2} = \nd{H}{T_2}$ are disjoint.
    $P$ obviously still is a path of length at most $\lk$ in $G$,
    so $T$ is indeed an $(i + 1,d,k)$-obstruction-tree of $G$.

    We now show that $\nd{H}{T} = \nd{G}{T}$.
    Since $H$ is an induced subgraph of $G$,
    we get that $\nd{H}{T} \subseteq \nd{G}{T}$.
    To show $\nd{H}{T} \supseteq \nd{G}{T}$,
    pick any variable $y$ from $\nd{G}{T}$.
    If $y \in \var(T)$ we get that $y \in \nd{H}{T}$
    by definition.
    If $y$ is positively connected to $c_1$ and negatively connected to $c_2$
    for two clauses $c_1,c_2 \in \cla(T)$,
    then $y \neq x$, since otherwise the assignment 
    of~$y$ in $H$ would delete a clause from $T$, which
    contradicts the fact that $T$ is an $(i+1,d,k)$-obstruction-tree in $H$.
    Since $y$ is not equal to $x$, its edges to $c_1$ and $c_2$ are not affected by the assignment of
    $x$ in $H$ and again $y \in \nd{H}{T}$ holds.
    It follows that $\nd{G}{T} = \nd{H}{T}$.
\end{proof}

\subsection[Time Complexity of Proposition 5.9]{Time Complexity of \cref{finding_ots_is_fpt}}\label{finding_ots_is_fpt_tc}
\begin{proof}
Let us prove by induction that
the time complexity of the algorithm presented in \cref{finding_ots_is_fpt} is $2^{2^{O(k)}}\cdot |G|$.
This clearly holds when $|G| =1$, or when $i=d$. We now move on to the induction
and analyze the run of the algorithm with parameters $(i+1,d,k)$ on a graph $G$.

First, note that when we split among several connected components these components are disjoints.
The sum of the sizes of these components is the size of the $G$. Unifying the recursive backdoor,
given by the components, by adding a common root node takes at most linear time, which is consistent with our hypothesis.

Second, when the graph is connected we first run the algorithm with parameters $(i,d,k)$.
If the run does not stop there, we have an $(i,d,k)$-obstruction-tree $T$.
We then consider a number of truth assignments that is bounded by $2^{\nd G T}$.
For each of these assignment, we run again our algorithm with parameters $(i,d,k)$, on graphs smaller than $G$.

If we find a second $(i,d,k)$-obstruction-tree we then only need to compute shortest path, which can be performed in linear time.
If every truth assignment provides a backdoor-tree, plugging them together 
only takes time linear in the number of possible truth assignments.

All together the procedure stays linear and the constant factor gets multiplied by a factor of the form
$O\left(2^{\nd G T}\right)$ each time $i$ decreases by one. By \cref{nd_is_small}, this is bounded by $O\left(2^{3^{i-d}\cdot \lk \cdot d}\right)$.
Using that both $d\le k$ and $i\le k+1$, this is of the form $2^{2^{O(k)}}$.
As $i$ can decrease by one at most $i-d$ many times (and therefore at most $i$ many times) until we get to a base case,
we get that the final constant factor is of the form $\left(2^{2^{O(k)}}\right)^{i} = 2^{i 2^{O(k)}} = 2^{2^{O(k)}}$.

We finally have that the overall complexity of a run with parameters $(i,d,k)$ on a graph~$G$ is bounded by
$2^{2^{O(k)}}\cdot|G|$.
\end{proof}

\subsection[Time Complexity of Theorem 5.10]{Time Complexity of \cref{find_srb}}\label{find_srb_tc}
  \begin{proof}
  Let $f(k) \cdot |G|$ be the time complexity of \cref{finding_ots_is_fpt} and $g(k,d)$ be $3^k \cdot \lk \cdot d$.
  We prove the time complexity of $2^{g(k,d) \cdot d} \cdot f(k) \cdot |G|$ by induction on $d$.
  If $d=0$ then we only have to construct a single node, which can be done in constant time.
  For graphs in $\Cc_{d+1}$, running the algorithm of \cref{finding_ots_is_fpt} can be done in
  time $f(k) \cdot |G|$.
  If we do not find a \SRB, we can abort.
  Otherwise we find a \SRB of depth at most $g(k,d+1)$
  such that all its leaves are members of $\Cc_d$.
  We can apply our induction hypothesis and
  assume that for a single leaf $H$, we can finish in time $2^{g(k,d) \cdot d} \cdot f(k) \cdot |H|$.
  Since the sum of the number of vertices in all leafs of the backdoor is at most $2^{g(k,d+1)} \cdot |G|$,
  we get that full algorithm has a running time in
  \[f(k) \cdot |G| + 2^{g(k,d) \cdot d} \cdot f(k) \cdot 2^{g(k,d+1)} \cdot |G|
  \leq 2^{g(k,d+1) \cdot (d + 1)} \cdot f(k) \cdot |G|.
  \]
  Since both $2^{g(k,d) \cdot d}$ and $f(k)$ are in $2^{2^{\bigO{k}}}$, the overall time complexity of the algorithm is in
  $2^{2^{\bigO{k}}} \cdot |G|$.
\end{proof}

\subsection[Proof of Theorem 6.1]{Proof of \cref{wrbd_is_hard}}\label{wrbd_is_hard_proof}
\begin{proof}
    We are going to show the W[2]-hardness of \textsc{Weak-Recursive-}$\Cc_0$\textsc{-Backdoor-Detection} $(\WRCZBD)$
    by reduction from the W[2]-complete \textsc{Set Cover} problem 
    \mbox{\cite[Theorem 13.28]{cygan2015parameterized}}.
    An instance $I$ of the \textsc{Set Cover} problem is composed of
    a universe $U$, an integer $k$, and a set $S \subseteq P(U)$.
    $I$ is a yes-instance if there exists a subset $L$ of $S$ with size at most $k$,
    such that the union of all sets in $L$ is equal to $U$.

    We reduce $I= (S,U,k)$ to the {\WRCZBD} instance $(\phi, k+1)$, where
    $\phi$ is a CNF formula over the variables $\{b_1,...,b_{k+2},s_1,...,s_n\}$,
    where $n=|S|$. This formula is constructed in the following way:

    For each element of the universe $u \in U$ a corresponding clause $\sigma_u$ is created,
    which we call \emph{element-clauses}.
    For each set $S_i$ a corresponding variable vertex $s_i$ is created,
    which we call \emph{set-variables}.
    Then $s_i$ and $\sigma_u$ are connected by a positive edge when $u \in S_i$.
    Therefore in the incidence graph, $s_i$ dominates all the clauses whose corresponding elements are contained in $S_i$.

    In addition, we create $k+2$ fresh variables $b_i$.
    Each are individually and positively connected to a fresh clause $\beta_i$.
    Furthermore, all $b_i$ are negatively connected to all $\sigma_u$, creating a $K_{k+2,|U|}$ bi-clique.
    More formally, we have:
    \begin{align*}
        \beta_i := & b_i\\
        \sigma_{u} := & \bigvee_{i=1}^{k+2} \neg b_{i}  \vee \bigvee_{\{i\le n~:~u\in S_{i}\} }s_{i}\\
        \phi := & \bigwedge_{i=1}^{k+2} \beta_i \wedge \bigwedge_{u\in U}\sigma_{u}
    \end{align*}

    We will now prove that this is in fact a valid reduction.

\medskip\noindent
    $\Rightarrow$:
    If $I = (\{S_1,...,S_n\},U,k)$ is a yes-instance, there exists a set of indices $J\subseteq \{1,...,n\}$ of at most size $k$ such that $\bigcup_{j\in J} S_j=U$.
    We construct the weak recursive backdoor $\{s_{j+}| j\in J\}$ of size and depth at most $k$ that dominates all element clauses.
    Once every variable $s_j$ has been assigned, every  element-clause $\sigma_u$ has been satisfied. 
    Only the $k+2$ clauses $\beta_i$ remain. These clauses are disjoint.
    We can therefore complete the backdoor by adding at depth $k+1$ every literal $b_i$ simultaneously.

\medskip\noindent
    $\Leftarrow$:
    Let $I = (S, U, k)$ be an instance for the \textsc{Set Cover}, and assume that $I' = (\phi,k)$ is a yes-instance.
    Then there must exist a weak recursive backdoor of depth at most $k$ that reduces $\phi$ to an edgeless graph.
    In order to satisfy every $\beta_i$ clause, each of the $b_{i}$ literals must be contained in the backdoor.
    Therefore the size of the backdoor is at least $k+2$.
    Since the backdoor can have depth at most $k+1$,
    at least two of the $\beta_i$ clauses have to be split into disconnected components and satisfied separately at some point.

    Since every variable $b_i$ is connected to every element-clause $\sigma_u$,
    the graph only contains one connected component, as long as a clause $\sigma_u$ is not satisfied.
    Since the literals $\neg b_{i}$ cannot be part of the backdoor,
    there must be a set of only set-variables that when assigned satisfy every element-clause.
    In order to not exceed the recursion depth of $k+1$,
    that set must be of size at most $k$.
    Having a set of at most $k$ set variables satisfying every element-clause
    implies the existence of a set cover of $U$ of at most $k$ elements. So $I$ is a yes-instance.
\end{proof}

\end{document}